\newtheorem{lemma}{\bf Lemma}
\def\BibTeX{{\rm B\kern-.05em{\sc i\kern-.025em b}\kern-.08em
		T\kern-.1667em\lower.7ex\hbox{E}\kern-.125emX}}
\begin{document}

\title{\huge Collaborative Edge AI Inference over Cloud-RAN}

% Kaifeng Han, Guangxu Zhu, Qimei Chen
% D. Wen
% \author{\IEEEauthorblockN{Pengfei Zhang, \IEEEmembership{Student Member, IEEE}, Dingzhu Wen, \IEEEmembership{Senior Member, IEEE}, \\
% Yuanming Shi,
% \IEEEmembership{Senior Member, IEEE}}
\author{\IEEEauthorblockN{Pengfei Zhang,  Dingzhu Wen,  Guangxu Zhu, Qimei Chen, Kaifeng Han, Yuanming Shi}
    \thanks{Pengfei Zhang, Dingzhu Wen and Yuanming Shi are with the Network Intelligence Center, the School of Information Science and Technology, ShanghaiTech University, Shanghai 201210, China (e-mail: \{zhangpf2022, wendzh, shiym\}@shanghaitech.edu.cn), Corresponding author: Dingzhu Wen.}
    \thanks{Guangxu Zhu is with the Shenzhen Research Institute of Big Data, Shenzhen
518172, China (e-mail: gxzhu@sribd.cn).}
    \thanks{Qimei Chen is with the School of Electronic Information, Wuhan University, Wuhan, 430072, China (e-mail: chenqimei@whu.edu.cn).}
    \thanks{Kaifeng Han is with China Academy of Information and
Communications Technology, Beijing 100191, China. (emails: hankaifeng@caict.ac.cn).}
}
\maketitle

% \author{Pengfei Zhang}
% , Dingzhu Wen, Guangxu Zhu, Qimei Chen, Kaifeng Han, Yuanming Shi

% \author{\small \IEEEauthorblockN{Pengfei Zhang\textsuperscript{$\dag$}, Dingzhu Wen\textsuperscript{$\dag$}, Guangxu Zhu\textsuperscript{$\ddag$},
% Qimei Chen\textsuperscript{$\natural$}, Kaifeng Han\textsuperscript{$\sharp$} and Yuanming Shi\textsuperscript{$\dag$} } \\
% 	\textsuperscript{$\dag$} School of Information Science and Technology, ShanghaiTech University, Shanghai, China \\
%         \textsuperscript{$\ddag$} Shenzhen Research Institute of Big Data, Shenzhen, China \\
%         \textsuperscript{$\natural$} School of Electronic Information, Wuhan University, Wuhan, China \\
%         \textsuperscript{$\sharp$} China Academy of Information and Communications Technology, Beijing, China \\
% 	Email: \{zhangpf2022, wendzh, shiym\}@shanghaitech.edu.cn, gxzhu@sribd.cn,  chenqimei@whu.edu.cn, hankaifeng@caict.ac.cn
% 	%\thanks{J. Xu is the corresponding author.}
% }

% \maketitle

\begin{abstract}
In this paper, a cloud radio access network (Cloud-RAN) based collaborative edge AI inference architecture is proposed. Specifically, geographically distributed devices capture real-time noise-corrupted sensory data samples and extract the noisy local feature vectors, which are then aggregated at each remote radio head (RRH) to suppress sensing noise. To realize efficient uplink feature aggregation, we allow each RRH receives local feature vectors from all devices over the same resource blocks simultaneously by leveraging an over-the-air computation (AirComp) technique. Thereafter, these aggregated feature vectors are quantized and transmitted to a central processor (CP) for further aggregation and downstream inference tasks. Our aim in this work is to maximize the inference accuracy via a surrogate accuracy metric called discriminant gain, which measures the discernibility of different classes in the feature space. The key challenges lie on simultaneously suppressing the coupled sensing noise, AirComp distortion caused by hostile wireless channels, and the quantization error resulting from the limited capacity of fronthaul links. To address these challenges, this work proposes a joint  transmit precoding, receive beamforming, and quantization error control scheme to enhance the inference accuracy. Extensive numerical experiments demonstrate the effectiveness and superiority of our proposed optimization algorithm compared to various baselines.

\end{abstract}

\section{Introduction}

\subsection{Overview}

The fundamental purpose of future networks will evolve from delivering conventional human-centric communication services to enable a transformative era of connected intelligence \cite{DBLP:journals/jsac/LetaiefSLL22, DBLP:journals/cm/LetaiefCSZZ19, DBLP:journals/cm/ZhuLDYZH20}. This paradigm shift will empower an array of advanced intelligent services, spanning across diverse domains such as autonomous driving, remote healthcare, and smart city applications, which will be seamlessly accessible at the network edge \cite{DBLP:journals/corr/abs-1911-03878, li2022deep, DBLP:journals/jcin/LanWZZCPH21}.  The implementation of these intelligent services depends on the deployment of well-trained AI models and the utilization of their inference capability for making intelligent decisions, which gives rise to the technique of edge inference \cite{DBLP:journals/comsur/ShiYJZL20,  DBLP:journals/corr/abs-2306-01162, DBLP:journals/tmc/LeeYD23, DBLP:conf/isit/YilmazHG22, zhu2023pushing, DBLP:journals/cm/ShaoZ20}.

Recently, considerable research efforts have been made for the efficient implementation of edge inference \cite{ yang2020energy, huang2020dynamic, yun2021cooperative, he2020attacking, kang2017neurosurgeon, li2019edge}. {\color{blue} Among others, the paradigm of edge-device collaborative inference is the most popular one. Specifically, edge-device collaborative inference divides an AI model into two parts. One part with a small size is deployed at an edge device for feature extraction \cite{DBLP:journals/cm/ShaoZ20} using a method like principal component analysis (PCA). The other computation-intensive part is deployed at the edge server and receives the extracted feature elements from the edge device to complete the residual inference task. It avoids the direct transmission of high-dimensional raw data vectors and offloads most part of the AI model to the server and therefore enjoys the benefits of low communication and computation overhead as well as privacy preservation. } The existing works on edge-device collaborative inference can be divided into two paradigms: single-device and multi-device paradigms. The former paradigm incurs narrow view observations due to a single device’s inherently limited sensing capability \cite{kang2017neurosurgeon,li2019edge,liu2022resource, shi2019improving, shao2021branchy, shao2020bottlenet++, lan2021progressive}. To tackle this issue, the multi-device paradigm has been explored in e.g. \cite{shao2022task,lee2023task,wen2023task,zhuang2023integrated}, where several views of sensory data obtained by multiple devices are collected and fused for inference.

However, the studies on multi-device collaborative inference mainly aim at the cooperation mechanism between multiple devices and the corresponding transceiver design, while ignoring the potential service capability of a single base station (BS).  In fact, the devices at the cell edge may fail to access BS due to weak channel conditions in certain cases \cite{ DBLP:journals/tsp/LiuZ15}. The limited service coverage capability of BSs can be further amplified especially when device mobility is taken into consideration, making it challenging for devices to seamlessly participate in inference tasks. Moreover, the traffic produced by a massive number of devices may also overwhelm a single BS because it exceeds the carrying capacity of BS \cite{DBLP:conf/ewsdn/DawsonMG14}.   To address these limitations and guarantee the inference performance, this paper proposes a  cloud radio access network (Cloud-RAN) \cite{shi2015large} based inference architecture over a resource-constrained wireless network to support the efficient implementation of edge inference.

\subsection{Related Works and Motivations}
One main research focus on edge-device collaborative inference in the single-device context is to further alleviate the computation and communication overhead for enhancing certain performances like achieving ultra-low latency (see e.g., \cite{kang2017neurosurgeon,li2019edge,shi2019improving}). Particularly, a split layer selection strategy is proposed for deep neural networks in \cite{kang2017neurosurgeon} to balance the tradeoff between the communication and computation overhead on devices. Early-exist mechanisms are investigated in \cite{li2019edge,liu2022resource}. The different parts of an AI model are progressively transmitted to the edge device until the accuracy of the current AI sub-model achieves the required performance. Besides, authors in \cite{shi2019improving} develop an efficient and flexible 2-step pruning framework, where unimportant convolution filters in deep neural networks (DNNs) are removed iteratively, and a series of pruned models are generated in the training phase. In addition, other methods, including feature compression techniques (see e.g., \cite{shao2021branchy, shao2020bottlenet++}) and progressive feature transmission \cite{lan2021progressive}, are also proposed.

However, the sensing range of a single device is usually restricted, resulting in a feature that either focuses on a partial view with insufficient information for inference or is extracted from raw data prone to severe distortion. To overcome the limited sensing capability of individual device, multi-device schemes with the target of enhancing the inference performance were proposed in \cite{shao2022task,lee2023task,wen2023task}. In \cite{shao2022task}, a distributed information bottleneck framework was applied to extract and encode features observed by multiple devices from different views of the same target. The local features of the same target that may occur in overlapping areas are captured by multiple devices at \cite{lee2023task}. {\color{blue} A novel multi-view radar sensing scheme was proposed in \cite{wen2023task}, where each device perceives the same wide range of the same target and the server receives aggregated feature vector by over-the-air computation (AirComp) for inference.} Similar to the work in \cite{wen2023task},  \cite{zhuang2023integrated} also assumes homogeneous sensing data and additionally takes the sensing process into consideration.

The above-mentioned works on multi-device paradigm assume that all devices can access the network and be perfectly served by the BS, which is unreasonable especially when handling scenarios where devices face poor channel conditions or mobile traffic surges. As stated in \cite{ma2023over}, simply replicating BS will inevitably result in significant resource waste. Recently, there have been some works in related fields proposed that apply the Cloud-RAN framework to implement federated edge learning (FEEL) to mitigate the above challenges \cite{ma2023over, 10167503}.  \cite{ma2023over} models the global aggregation stage as a lossy distributed source coding problem. \cite{10167503} minimizes the equivalent noise introduced by the FEEL communication stage through the joint design of precoding, quantization, and receive beamforming. Moreover,  \cite{ma2023over} and \cite{10167503} both use the AirComp technique to receive the model update, which greatly improves communication efficiency. Nonetheless, the current implementation of edge inference systems has not taken into account this flexible wireless access network architecture required to support multi-device deployment, which forms the main motivation of our study.

In such an architecture, the BSs are replaced by low-cost and low-power remote radio heads (RRHs), all of which are connected to a centralized processor (CP) located in the baseband unit (BBU) pool through capacity-limited fronthaul links \cite{shi2015large}. The baseband processing is migrated from RRHs to the cloud-computing based CP. RRHs are merely considered as relays with basic signal transmission functionality. As a result, the Cloud-RAN architecture allows the CP to jointly encode or decode user messages thus significantly extending the coverage area \cite{stephen2017joint} and improving inference performance.  However, limited fronthaul capacity between RRHs and  CP  also incurs undesirable quantization error \cite{DBLP:journals/jsac/ZhouY14}. To the best of our knowledge, this work makes the first attempt to apply the Cloud-RAN architecture to complete edge-device collaborative inference. 

\begin{figure}[htbp]
    \centering
    \includegraphics[width=0.45\textwidth]{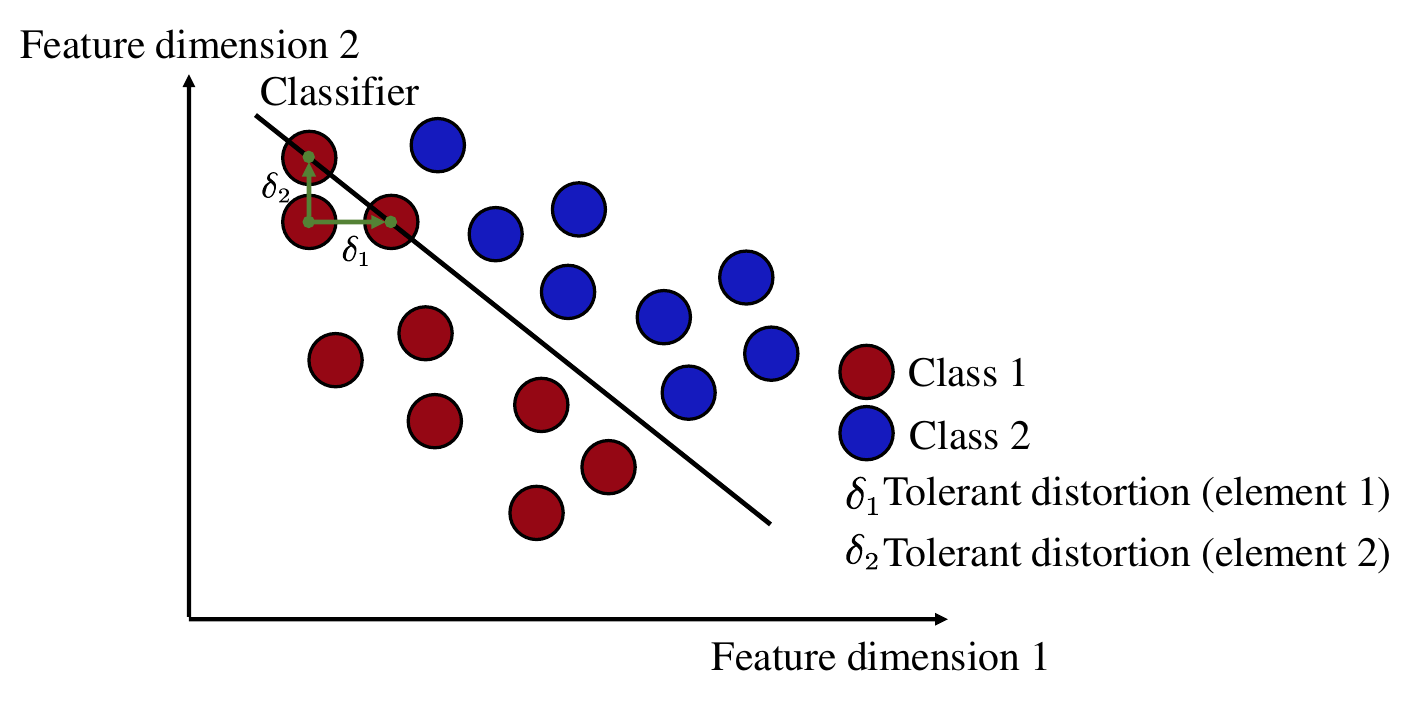}
    \captionsetup{font=small}
    \caption{The varying levels of distortion tolerance among different feature elements in classification tasks.  The distortion level $\delta_1$ can cause incorrect inference on element $1$ but not on element $2$. }\label{fig:dg}
\end{figure} 

On the other hand, as shown by \cite{wen2023task, zhuang2023integrated, shi2023task}, the design of edge inference should feature a task-oriented property. The traditional communication objective to achieve high throughput and low data distortion lacks the ability of distinguishing the feature elements with the same loads and distortion levels but with various importance levels to the inference performance.  In reality, taking the classification task as an example, inference accuracy should be directly maximized as the primary design goal to ensure differential transmission of features. {\color{blue} 
 However, the instantaneous inference accuracy is unknown and lacks of a mathematical model. Recently, there have been some works in the edge inference community that attempt to tackle this problem using an approximate but tractable metric called discriminant gain for classification tasks \cite{lan2021progressive, wen2023task, zhuang2023integrated}.  Discriminant gain is derived based on the well-known Kullback-Leibler (KL) divergence and measures the discernibility of different classes in the feature space.}  For arbitrary two classes, a larger value of discriminant gain represents better separation of the two classes, leading to a higher achievable inference accuracy. For example, a simple classification task is shown in Fig. \ref{fig:dg}, where the feature vector has two feature dimensions. It can be observed that feature dimension 2 is more tolerant to distortion than feature dimension 1 in terms of getting correct inference results. In reality, inference accuracy should be directly maximized as a task goal to ensure the  features 
However, how to apply this metric to the Cloud-RAN based edge inference framework still requires additional study,  which forms the main technical contributions of our paper.

\subsection{Contributions}
In this paper, we propose a  Cloud-RAN based edge inference framework.  The major contributions can be summarized as follows:

\begin{itemize}
        \item \textbf{Cloud-RAN based Multi-device Collaborative Inference System}: We propose a Cloud-RAN architecture based multi-cell network to support 
        multi-device collaborative edge inference system, where a CP serves many geographically distributed devices through multiple RRHs to provide seamless connectivity service. The devices sense a source target from a same wide view to obtain noise-corrupted sensory data for extracting local feature vectors, which are further aggregated by  each RRH using the technique of AirComp. Then, all RRHs quantize their aggregated signals and transmit the compressed signals to the CP, where  all received signals are further aggregated and input into a powerful AI model to finish the downstream inference task.

         \item \textbf{Task-oriented Design Principle}: In the traditional Cloud-RAN based communication system design, most of the works focus on the goal of maximizing the achievable rate, ignoring the task behind the communication. However, in considered edge inference scenario, communication should first serve the inference accuracy, and it is obviously not a wise choice to take achievable rate as the primary goal. To this end, this paper considers a task-oriented design metric, i.e.,  discriminant gain, which can measure the heterogeneous contribution of different feature elements on inference accuracy. By employing this criterion, limited resources can be adaptively allocated to guarantee the most significant feature elements of the inference task can be well received at the CP, leading to an enhanced inference accuracy.

        \item \textbf{Joint Optimization of Quantization, Transmit Precoding, and Receive Beamforming}: Different from existing work where the transmission in different time slots is separately designed, the aggregation of all feature elements is jointly designed. This allows resource allocation among all feature elements, leading to an extra degree of freedom for enhancing the inference accuracy. To this end, a problem of joint quantization noise, transmit precoding and receive beamforming is formulated. To solve this intractable and non-convex problem, we first convert it into an equivalent problem via variable transformation. The equivalent problem then is split into two sub-problems, where one sub-problem is to jointly optimize the receive beamforming and the transmit precoding, and the other sub-problem jointly to optimize the quantization noise matrix and transmit precoding. An iterative algorithm is proposed to solve each sub-problem alternately, where successive convex approximation (SCA) techniques are applied to both sub-problems dealing with the same constraint term.
        
        \item \textbf{Performance Evaluation:} We conduct extensive numerical experiments on a high-fidelity human motion dataset with two inference models, i.e., support vector machine (SVM) and multi-layer perception (MLP) neural network, respectively.    The experiment results demonstrate the effectiveness of the proposed system architecture and optimization approach and also confirm that maximizing discriminant gain indeed improves inference accuracy.
    \end{itemize}

\subsection{Organization and Notations}
 The rest of this paper is organized as follows. Section \uppercase\expandafter{\romannumeral2} describes the system model of Cloud-RAN based multi-devices collaborative inference.  Section \uppercase\expandafter{\romannumeral3} formulates the problems with the goal of maximizing inference accuracy based on the discriminate gain, and simplifies the subsequent analysis by zero-forcing precoding.  An alternating optimization approach is developed in Section \uppercase\expandafter{\romannumeral4} to solve the formulated optimization problem. In Section \uppercase\expandafter{\romannumeral5}, extensive numerical experiments are presented to evaluate the performance of the proposed methods. Finally, Section\uppercase\expandafter{\romannumeral6} concludes this paper. Besides,  Table \ref{tab:abbre} lists some abbreviations used in the paper to facilitate subsequent smooth reading.

% \subsection{Notation}
The notations used in this paper are as follows. The complex and real numbers are denoted by $\mathbb{C}$ and $\mathbb{R}$. The real and imaginary components of complex $x$ are denoted by $\Re$ and $\Im$, respectively. The boldface upper-case letters and boldface lower-case letters represent the matrices and vectors, respectively. The superscripts $(\cdot)^{\sf T}$ and $(\cdot)^{\sf H}$ denotes the transpose and Hermitian operations, respectively. \ $\mathcal{N}(\mathbf{x}; \bf{\mu}, \bf{\Sigma})$ and $\mathcal{CN}(\mathbf{x}; \bf{\mu}, \bf{\Sigma})$ denote that the random variable $\mathbf{x}$ follows Gaussian distribution and complex Gaussian distribution with the mean $\bf{\mu}$ and covariance $\bf{\Sigma}$, respectively.  $\mathbb{E}[\cdot]$ is expectation operator. We use $\mathbf{I}$ and $\text{diag}\left(\{\mathbf{Q}_m\}_{m=1}^M\right) $ respectively denote the identity matrix and diagonal matrix with $\mathbf{Q}_m$ on the diagonal. We let $\mathcal{D}$ to indicate the integer set $\{1, \cdots, D\}$.  For ease of understanding, some important notations and parameters are further summarized in Table \ref{tab:notions}.

\begin{table}[tt]
    \caption{List of abbreviations}
    \centering
    \begin{tabular}{ll}
    \toprule
    Abbreviation & Description \\
    \midrule
    Cloud-RAN & Cloud radio access network \\
    BBU &  baseband unit pool \\
    CP & Central Processor \\
    RRH & Remote radio head \\
    BS & Base station \\
    AirComp & Over-the-air computation \\
    CSI & Channel state information \\
    AWGN & Additive white Gaussian noise \\ 
    FEEL & federated edge learning \\
    DNN & Deep neural network \\
    SVM & support vector machine \\ 
    MLP & multi-layer perception \\
    PCA & Principal component analysis \\
    PDF & probability density function \\
    SCA & successive convex approximation \\
    KL divergence &  Kullback-Leibler  divergence \\
    KKT condition & Karush-Kuhn-Tucker condition \\
    \bottomrule
    \end{tabular}
    \label{tab:abbre}
     \vspace{-1.0em}
\end{table}

\section{System Model}

\begin{table*}[tt]
    \caption{Important Notations}
    \centering
    \begin{tabular}{ll}
    \toprule 
    Notation &  Definition \\
    \midrule
    $K$  &  Number of edge devices \\
    $M$  &  Number of RRHs \\
    $N$  &  Number of antennas for each RRH \\
    $D$  &  Number of feature dimensions (time slots)  \\
    $L$  &  Number of Gaussian components (classes) \\
    $\bm{\mu}_{\ell}$ & Centroid of the ${\ell}$-th class\\
    $\bm{\Sigma}$ & Covariance of all classes  \\
    $C_m$ & Fronthaul link capacity between RRH $m$ and CP \\
    $\mathbf{h}_{k,m}$ & Uplink channel between device $k$ and RRH $m$ \\
    $s_k(d)$ & Uplink transmit signal for device $k$ in the $d$-th time slot \\
    $b_k(d)$ & Uplink precoding scalar for device $k$ in the $d$-th time slot \\ 
    $\mathbf{q}_m(d)$ & Uplink quantization noise for RRH $m$ in the $d$-th time slot \\
    $\mathbf{z}_m(d)$ & Uplink additive white Gaussian noise (AWGN) for RRH $m$ in the $d$-th time slot\\
    $\mathbf{Q}_m$ & Diagonal covariance matrix of $\mathbf{q}_m(d)$ for RRH $m$ \\
    $\mathbf{m}_d$ & Receive beamforming vector in the $d$-th time slot \\
    $\hat{P}$ & Maximum uplink transmit power   \\
    $E$ & Maximum uplink energy consumption \\ 
    \bottomrule
    \end{tabular}
    \label{tab:notions}
\end{table*}

\subsection{Network and Sensing Model}
\begin{figure*}
    \centering
    \includegraphics[scale=0.35]{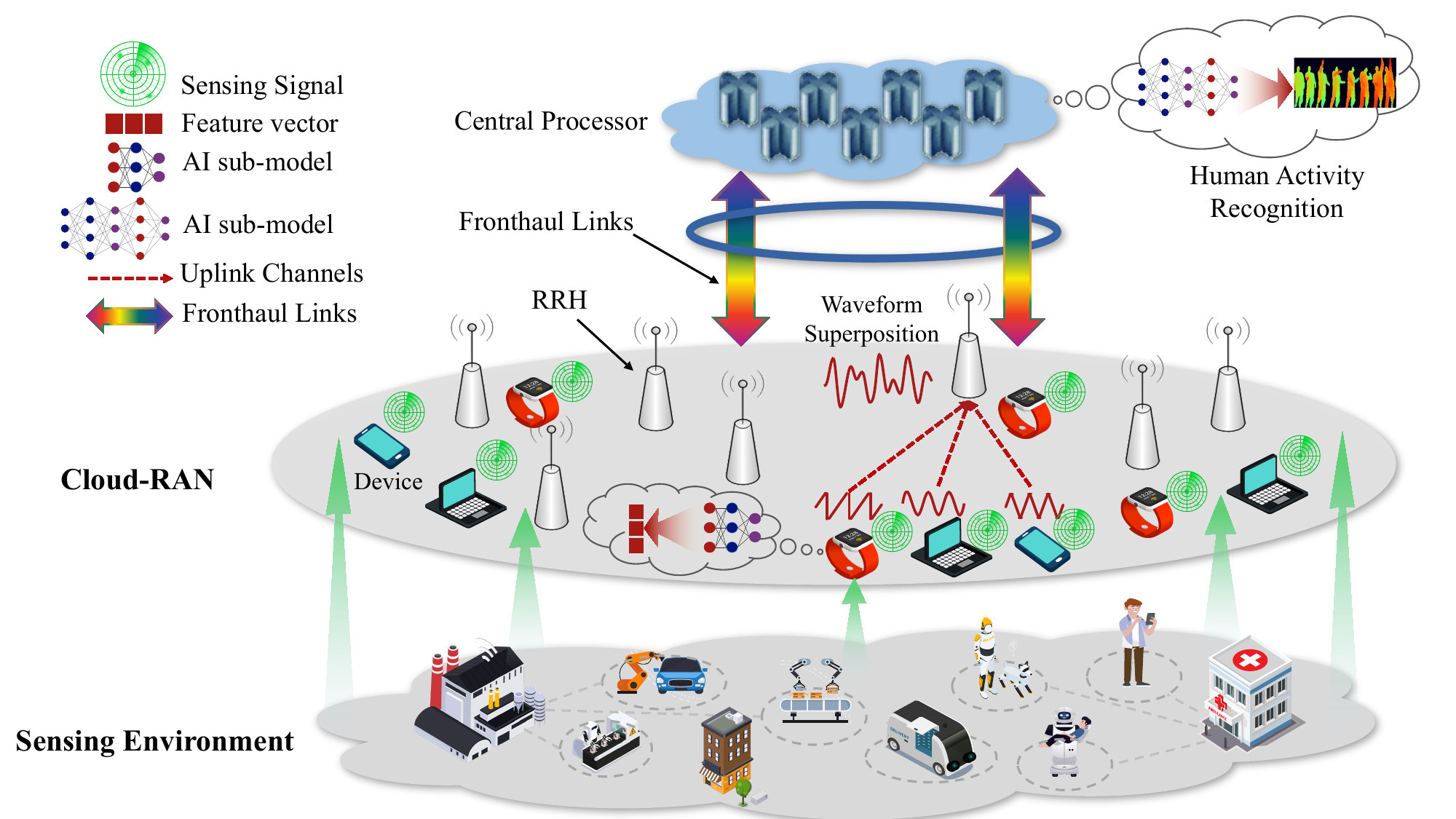}
    \captionsetup{font=small}
    \caption{AirComp-based Cloud-RAN network for edge inference.}
    \label{fig:Cloud-RAN}
\end{figure*}
Consider a multi-cell Cloud-RAN to complete edge inference tasks, where there is one CP, $M$ multi-antenna RRHs, and $K$ single-antenna edge devices. The RRHs lack individual encoding/decoding capability and only have basic signal transmission and reception functions. Each RRH collects information from edge devices via wireless links and then forwards them to CP \cite{DBLP:journals/jsac/ZhouY14, liu2015optimized}. The uplink channel gain between device $k$ and RRH $m$ is denoted as ${\bf h}_{k,m}$.   In uplink transmission, we assume that each device can acquire perfect channel state information (CSI) between itself and all RRHs through uplink pilot signaling \cite{10167503, DBLP:journals/jsac/ZhouY14}. Then the CP serves as a central coordinator, which is also assumed to have the ability to acquire the CSI of all involved links. All RRHs are connected to the CP through a noiseless finite-capacity fronthaul link, as shown in Fig.\ref{fig:Cloud-RAN}. Let $C_m$ denote the fronthaul capacity of the link between RRH $m$ and the CP. The following overall capacity constraint should be satisfied \cite{DBLP:journals/jsac/ZhouY14},
\begin{equation}
    \sum_{m=1}^M C_m \leq C,
\end{equation}
where $C$ is the total capacity of all fronthaul links.

To complete the edge inference task, each device observes the same source target in the same wide view (see e.g., \cite{wen2023task}) to obtain a distortion-corrupted version of the ground-true sensory data.  Then, linear methods like PCA are adopted at each device to extract a local low-dimensional feature vector, which is also noise-corrupted \cite{wen2023task, zhuang2023integrated, wen2023task2, DBLP:journals/tsp/XiaoCLG06, DBLP:journals/tit/XiaoL05}.  Next, each RRH aggregates all feature vectors from all devices to form an intermediate feature vector, which is further quantized and forwarded to the CP via the fronthaul link. At the CP, all intermediate feature vectors are further aggregated to form a global estimate, which is used for finishing the downstream inference task. 

Specifically, the local noise-corrupted sensory data of device $k$ is given by 
\begin{equation} \label{sensory_data}
\begin{aligned}
\mathbf{x}_k = \mathbf{x} + \mathbf{e}_k,
\end{aligned}
\end{equation}
where $\mathbf{x} \in \mathbb{R}^S$ is ground-true sensory data, $\mathbf{e}_k$ is the sensing distortion with the same dimension as the ground-true data.  It is worth noting that wide-view sensing is adopted here, which can be achieved by scanning the sensing directions from angle to angle or conducting beamforming in a MIMO system \cite{yang2017wide}. According to \cite{2006Power},  the sensing distortion vector follows Gaussian distributions with mean zero and covariance $\varepsilon_k^2 \mathbf{I}_k$, i.e., 
\begin{equation}\label{distortion}
\begin{aligned}
\mathbf{e}_k \sim \mathcal{N}(\bm{0}, \varepsilon_k^2 \mathbf{I}),
\end{aligned}
\end{equation}
where $\varepsilon_k^2$ is the sensing noise power.

\subsection{Feature Generation and Distribution}
\begin{figure*}[htbp]
    \centering
    \includegraphics[scale=0.32]{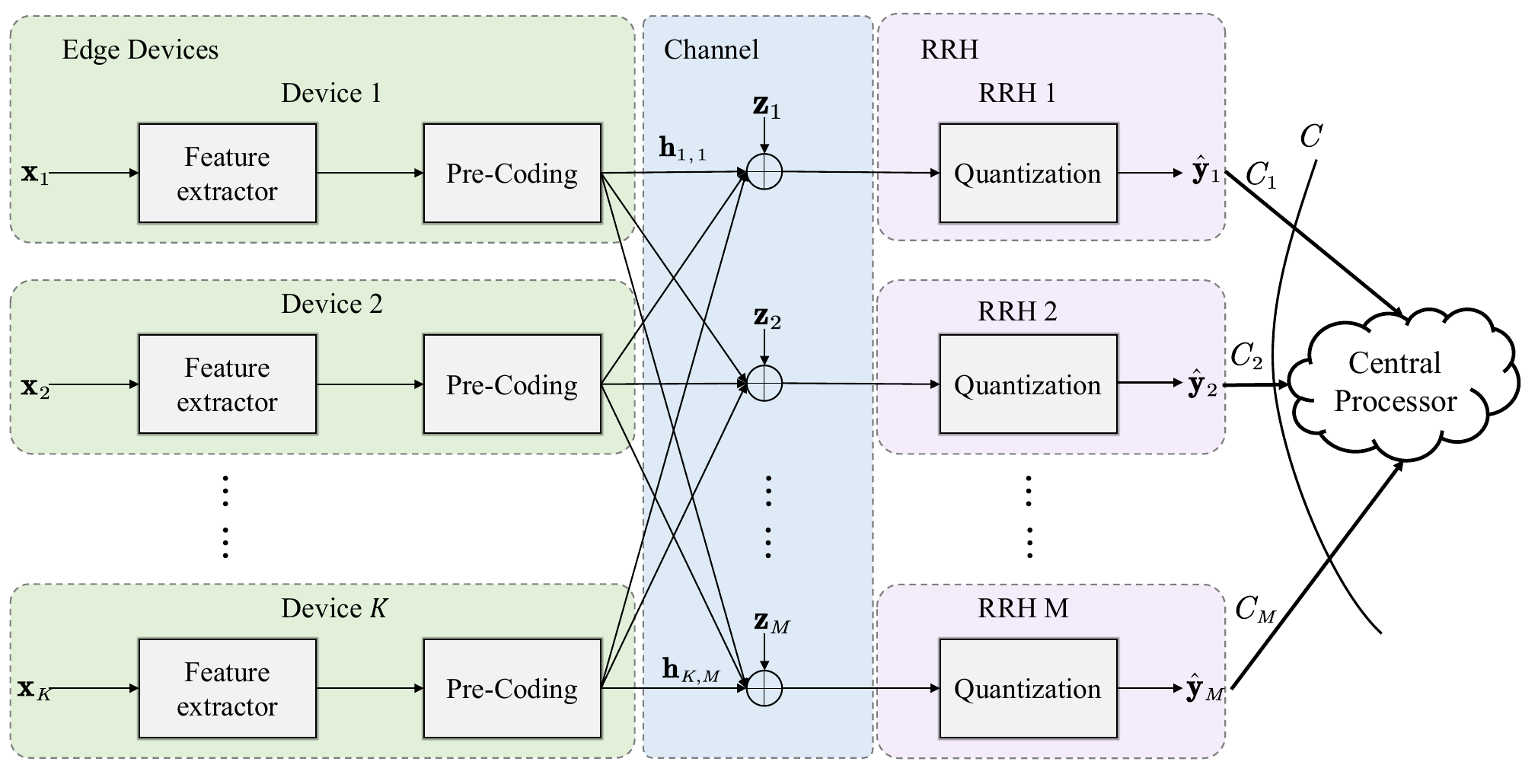}
    \captionsetup{font=small}
    \caption{Illustration of Cloud-RAN system with AirComp.}
    \label{fig:uplink_communication}
\end{figure*}

\subsubsection{Feature Extraction} 
In this work, the method of PCA is used for feature extraction. The detailed procedure is listed below.
\begin{itemize}
    \item In the training stage, the training dataset is used to calculate a principal eigen-space, which is denoted as  $\mathbf{U}$ and satisfies $\mathbf{U}^{\sf T}\mathbf{U} = {\bf I}$, via the eigen-decomposition of the sum co-variance of all data samples. Then, the unitary matrix $\mathbf{U}$ is broadcast to all RRHs and edge devices. 
    
    \item In the inference stage, all local sensory data are projected to the principal eigenspace using ${\bf U}$ for feature extraction.
\end{itemize}
    Specifically, the feature vector extracted at device $K$ can be written as
\begin{equation} \label{feature_vector}
    \mathbf{\Tilde{x}}_k = \mathbf{U}^{\sf T}\mathbf{x}_k =  \mathbf{\Tilde{x}} +  \mathbf{\Tilde{e}}_k  = \mathbf{U}^{\sf T} \mathbf{{x}} + \mathbf{U}^{\sf T}\mathbf{{e}}_k, \ \forall k\in \mathcal{K}. 
\end{equation}
where $ \mathbf{\Tilde{x}} = \mathbf{U}^{\sf T} \mathbf{{x}}$ is the ground-true feature vector, $\mathbf{\Tilde{e}}_k = \mathbf{U}^{\sf T}\mathbf{{e}}_k$ is projected noise vector of edge devices $k$. By leveraging the orthogonality of unitary matrix $\mathbf{U}$, it can be easily shown that the distribution of the projected noise vector remains unchanged, i.e., \begin{equation}\label{Eq:DistortionDistribution}
\mathbf{\Tilde{e}}_k \sim \mathcal{N}(\bm{0}, \varepsilon_k^2 \mathbf{I}).
\end{equation}

\subsubsection{Feature Distribution}
Consider a classification task with $L$ classes. Following the same settings in \cite{lan2021progressive, wen2023task, zhuang2023integrated}, 
we assume that the ground-true feature vector $\mathbf{\Tilde{x}}$ follows a mixture of Gaussian distributions with $L$ Gaussian components. Its probability density function (PDF) is given as
\begin{equation} \label{pdf}
\begin{aligned}
f(\mathbf{\Tilde{x}}) = \frac{1}{L} \sum\limits_{\ell=1}^L \mathcal{N}(\bm{\mu}_{\ell},\bm{\Sigma}),
\end{aligned}
\end{equation} 
where the $\ell$-th Gaussian component $\mathcal{N}(\bm{\mu}_{\ell},\bm{\Sigma})$ corresponds to the $\ell$-th class, $\bm{\mu}_{\ell} \in \mathbb{R}^D$ is the centroid of the $\ell$-th class, $D$ is the dimension of the extracted feature vector, and $\bm{\Sigma} \in \mathbb{R}^{D\times D}$ is a covariance matrix and is same for all classes.  In practice, the raw data or the intermediate feature maps (e.g., the output of a convolutional layer) may not follow a Gaussian mixture model. In this case, a feasible strategy is to fit the data or the feature map into the distribution of the Gaussian mixture. The effectiveness of this approach has been validated through extensive experiments in existing literature \cite{lan2021progressive, wen2023task, zhuang2023integrated, wen2023task2, DBLP:journals/widm/McLachlanR14, mclachlan2019finite}. Since the method of PCA is applied, different elements of the feature vector $\mathbf{\Tilde{x}}$ are independent, i.e., the covariance matrix is diagonal and is denoted as $\bm{\Sigma} = \text{diag}\{\sigma_1^2,\sigma_2^2,...,\sigma_D^2\}$. 

Then, by substituting the distributions of the ground-true feature vector $\mathbf{\Tilde{x}}$ and the sensing distortion $\mathbf{\Tilde{e}}_k$ in  \eqref{pdf} and \eqref{Eq:DistortionDistribution} into the local feature vector $\mathbf{\Tilde{x}}_k$ in \eqref{feature_vector}, we have the following lemma:
\begin{lemma} \label{lma:1}
    The distribution of the local feature vector $\mathbf{\Tilde{x}}_k$ can be derived as
    \begin{equation}
       f( \mathbf{\Tilde{x}}_k ) = \dfrac{1}{L}\sum\limits_{\ell=1}^L \mathcal{N}(\bm{\mu}_{\ell},\bm{\Sigma}+\varepsilon_k^2 \mathbf{I}), \ \forall k\in \mathcal{K}.
    \end{equation}
\end{lemma}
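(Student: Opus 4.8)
The plan is to view the local feature vector $\mathbf{\Tilde{x}}_k = \mathbf{\Tilde{x}} + \mathbf{\Tilde{e}}_k$ from \eqref{feature_vector} as a sum of two \emph{independent} random vectors and to obtain its density as a convolution. First I would establish the independence of $\mathbf{\Tilde{x}}$ and $\mathbf{\Tilde{e}}_k$: the sensing distortion $\mathbf{e}_k$ in \eqref{sensory_data} is a measurement error that is physically independent of the ground-true data $\mathbf{x}$, and since $\mathbf{\Tilde{x}} = \mathbf{U}^{\sf T}\mathbf{x}$ and $\mathbf{\Tilde{e}}_k = \mathbf{U}^{\sf T}\mathbf{e}_k$ are the images of $\mathbf{x}$ and $\mathbf{e}_k$ under the \emph{same} deterministic linear map $\mathbf{U}^{\sf T}$, this independence is preserved after projection.

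Given independence, the density of the sum equals the convolution of the two marginal densities, $f(\mathbf{\Tilde{x}}_k) = (f_{\mathbf{\Tilde{x}}} * f_{\mathbf{\Tilde{e}}_k})(\mathbf{\Tilde{x}}_k)$. I would then substitute the mixture density \eqref{pdf} for $f_{\mathbf{\Tilde{x}}}$ and the zero-mean Gaussian density \eqref{Eq:DistortionDistribution} for $f_{\mathbf{\Tilde{e}}_k}$, and use the linearity of convolution to interchange the finite sum with the integral. This reduces the problem to convolving each single component $\mathcal{N}(\bm{\mu}_{\ell},\bm{\Sigma})$ with $\mathcal{N}(\bm{0},\varepsilon_k^2\mathbf{I})$; because the mixing weights $1/L$ are constants, they pass through untouched, so the mixture structure and weights are retained. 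The one remaining computation is the convolution of two Gaussians, for which I would invoke (or verify by completing the square in the exponent) the standard identity that convolving $\mathcal{N}(\bm{\mu}_{\ell},\bm{\Sigma})$ with $\mathcal{N}(\bm{0},\varepsilon_k^2\mathbf{I})$ yields $\mathcal{N}(\bm{\mu}_{\ell},\bm{\Sigma}+\varepsilon_k^2\mathbf{I})$, i.e. the means add and the covariances add. Reassembling the $L$ convolved components with their weights gives exactly the claimed $f(\mathbf{\Tilde{x}}_k) = \frac{1}{L}\sum_{\ell=1}^L \mathcal{N}(\bm{\mu}_{\ell},\bm{\Sigma}+\varepsilon_k^2\mathbf{I})$.

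An equivalent and slightly cleaner route avoids the integral entirely by working with characteristic functions: the characteristic function of the sum factorizes into the product of those of $\mathbf{\Tilde{x}}$ and $\mathbf{\Tilde{e}}_k$, so the Gaussian factor $\exp(-\tfrac12\varepsilon_k^2\|\mathbf{t}\|^2)$ simply multiplies each mixture term, and one reads off the inflated covariance $\bm{\Sigma}+\varepsilon_k^2\mathbf{I}$ by inspection. Since the argument is a direct application of these two facts, I do not anticipate a genuine obstacle; the only points demanding care are confirming the signal--noise independence (so that the convolution/product formula is legitimate) and keeping track of the fact that the common covariance $\bm{\Sigma}$ is inflated by $\varepsilon_k^2\mathbf{I}$ for every class simultaneously, rather than in a class-dependent way.
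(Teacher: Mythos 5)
Your proof is correct. It reaches the same essential conclusion by the same underlying facts — independence of $\mathbf{\Tilde{x}}$ and $\mathbf{\Tilde{e}}_k$, the additivity of covariances when summing independent Gaussians, and linearity to preserve the mixture weights — but it is packaged differently from the paper's Appendix A. The paper writes the mixture as $\mathbf{\tilde{x}} = \frac{1}{L}\sum_{\ell=1}^L \tilde{\mathbf{x}}_\ell$ with $\tilde{\mathbf{x}}_\ell \sim \mathcal{N}(\bm{\mu}_\ell,\bm{\Sigma})$ "independent," adds $\mathbf{\Tilde{e}}_k$ to each component, and reassembles; taken literally this is loose, since an arithmetic average of $L$ independent Gaussian vectors is itself a single Gaussian with covariance $\bm{\Sigma}/L$, not a mixture — the intended (and correct) reading is a latent-class decomposition: conditional on class $\ell$, the feature is $\mathcal{N}(\bm{\mu}_\ell,\bm{\Sigma})$, one adds the independent noise conditionally, and then marginalizes over the uniform class label. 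Your convolution formulation $f(\mathbf{\Tilde{x}}_k) = (f_{\mathbf{\Tilde{x}}} * f_{\mathbf{\Tilde{e}}_k})$, with linearity distributing the convolution over the $L$ components, sidesteps this notational pitfall entirely and is the cleaner way to make the argument rigorous; the characteristic-function variant you mention is equally valid. The only point you flag as needing care — independence of signal and noise being preserved under the common linear map $\mathbf{U}^{\sf T}$ — is indeed the one hypothesis the paper also takes for granted, so there is no gap.
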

\begin{proof}
Please see Appendix A.
\end{proof}

% \cite{}
\subsection{Communication Model}
To collect all local feature vectors at the CP, the technique of AirComp is adopted to allow all devices transmitting their local feature vectors to the RRHs via a shared multiple access channel, which can significantly enhance the communication efficiency \cite{yang2020federated, zhu2019broadband, DBLP:journals/twc/CaoZXH20}. {\color{blue} In wireless communication, the Aircomp technique is especially suitable for such a scenario where the receiver only focuses on the fusion computation result of massive data from multiple data sources, but does not care about the specific value of each individual data source \cite{liu2020over}. Some examples of computable fusion functions via AirComp can be found in \cite{csahin2023survey}.} As a result, each RRH directly receives an intermediate aggregated analog feature vector, which is further quantized and transmitted to the CP through the assigned fronthaul links, as shown in Fig. \ref{fig:uplink_communication}. The detailed procedure is described as follows.  

\subsubsection{Over-the-Air Aggregation at RRHs}
Since all edge devices are equipped with a single antenna. In each time slot, one dimension of the feature vector is transmitted via AirComp. The whole feature vector with $D$ dimensions is transmitted sequentially over $D$ time slots. Without loss of generality, during the overall $D$ time slots, the channel is assumed to be static, as the time duration of transmitting one symbol is far less than the channel coherence time \cite{zhu2019broadband}. Under this setting, consider an arbitrary time slot $d$, the $d$-th dimension of the feature vector is transmitted by all devices via AirComp. Let $s_k(d) = \mathbf{\Tilde{x}}_k(d)$ denote the transmit signal in the $d$-th time slot, $b_k(d) \in \mathbb{C}$ denote the transmit precoding scalar of edge devices $k$ at the time slot $d$ for the power control. At an arbitrary RRH $m$, the received signal can be derived as
\begin{equation}
\begin{aligned}
\mathbf{y}_m(d) = \sum\limits_{k=1}^K \mathbf{h}_{k,m} b_k(d) s_k(d) + \mathbf{z}_m(d), \; \forall d \in \mathcal{D},
\end{aligned}
\end{equation}
where $\mathbf{h}_{k,m} \in  \mathbb{C}^{N}$ is the channel coefficient between device $k$ and RRH $m$, $N$ denotes the number of antennas on the RRH, and $\mathbf{z}_m(d) \sim \mathcal{CN}(0,\sigma_z^2\mathbf{I})$ denotes the additive white Gaussian noise (AWGN) for RRH $m$. Herein each device's transmit power should not be beyond its maximum transmit power, leading to the following transmit power constraint:
\begin{equation} \label{power_constraint1}
    \mathbb{E} \left[ \left| b_k(d) s_k(d)^2 \right| \right] =  \left| b_k(d) \right|^2 \mathbb{E} \left[ s_k(d)^2  \right] \leq P_k,  \; \forall k\in \mathcal{K}, \; \forall d \in \mathcal{D}.
\end{equation}
Besides, the variance of the transmit signal $s_k(d)$, i.e., $\mathbb{E} \left[ s^2_k(d) \right]$ is known by the CP as a prior
information (e.g., estimated from the offline data samples). Therefore, the power constraint in \eqref{power_constraint1} can be rewritten as
\begin{equation} \label{power_constraint2}
    \left| b_k(d) \right|^2 \leq \hat{P}_k, \; \forall k\in \mathcal{K}, \; \forall d \in \mathcal{D},
    \end{equation}
where $\hat{P}_k = P_k / \mathbb{E} \left[ s^2_k(d) \right]$ is the maximum transmit precoding power. In addition, we also impose a total energy constraint on the data transmission process, that is, the energy consumption of all edge devices in all time slots should satisfy
 \begin{equation} \label{energy_cons1}
 \begin{aligned}
    & \sum\limits_{d=1}^D \sum\limits_{k=1}^K \left( \mathbb{E} \left[ \left| b_k(d) s^2_k(d) \right| \right] \cdot T \right)  \\
    & \quad\quad\quad\quad\quad\quad\quad   =  \sum\limits_{d=1}^D \sum\limits_{k=1}^K \left(\left| b_k(d)  \right|^2  \mathbb{E} \left[ s^2 _k(d) \right] \cdot T \right)  \leq E, 
 \end{aligned}
 \end{equation}
 where $E$ denote the total energy constraint, $T$ is time duration of each AirComp aggregation.
\subsubsection{Quantization of Intermediate Feature Vectors}
The received aggregated intermediate feature vectors $\{{\bf y}_m\}$ are quantized at the RRHs before being forwarded to the CP through the capacity-limited fronthaul links. Each RRH performs the signal quantization independently. The influence of quantization on the signal can be modeled as a Gaussian test channel with the unquantized signals as the input and quantized signals as the output \cite{2015Fronthaul}. Specifically, the $d$-the element of the quantized intermediate feature vector at RRH $m$ can be written as
\begin{equation}
\begin{aligned}
\hat{\mathbf{y}}_m(d) = \mathbf{y}_m(d) + \mathbf{q}_m(d),  \; \forall m \in \mathcal{M}, \; \forall d \in \mathcal{D},
\end{aligned}
\end{equation}
where $\mathbf{q}_m(d) \in \mathbb{C}^N \sim \mathcal{CN}\left(\bm{0}, \mathbf{Q}_m\right)$ denotes the quantization noise and  $\mathbf{Q}_m$ is the diagonal covariance matrix of the quantization noise for RRH $m$ due to independent quantization scheme. Based on the rate-distortion theory \cite{quek2017cloud}, the fronthaul rates of $M$ RRHs at the $d$-th time slot should satisfy
\begin{equation} \label{capacity_constraint1}
\begin{aligned}
&\quad \sum\limits_{m=1}^M C_m(d) = \sum_{m=1}^M I\left({\mathbf{y}}_m(d) ; \hat{\mathbf{y}}_m(d)\right) \\
&= \sum\limits_{m=1}^M \log \frac{\left| \sum_{k=1}^K \left|b_k(d)\right|^2 \mathbf{h}_{k,m} (\mathbf{h}_{k,m})^{\sf H} + \sigma_z^2\mathbf{I} + \mathbf{Q}_m \right|}{\left| \mathbf{Q}_m \right|} \\
&\leq \sum\limits_{m=1}^M \log \frac{\left| \hat{P} \sum_{k=1}^K   \mathbf{h}_{k,m} (\mathbf{h}_{k,m})^{\sf H}  + \sigma_z^2\mathbf{I} + \mathbf{Q}_m
 \right|}{\left| \mathbf{Q}_m \right|} \\
&= \log \frac{\left| \hat{P} \sum_{k=1}^K \mathbf{h}_k  (\mathbf{h}_{k})^{\sf H} + \sigma_z^2\mathbf{I} + \mathbf{Q} \right|}{\left| \mathbf{Q}\right|} \leq C, 
\end{aligned}
\end{equation}
where $\hat{P}$ is maximum transmit power of all edge devices,  $\mathbf{h}_{k} = [\mathbf{h}_{k,1}^{\sf T},\cdots,\mathbf{h}_{k,M}^{\sf T}]^{\sf T}$ is  concatenated channel vector and $\mathbf{Q} = \text{diag}\{\mathbf{Q}_1,\cdots, \mathbf{Q}_m \}$ is defined as the uplink covariance matrix. 
\subsubsection{Global Feature Aggregation at the CP}
The $d$-th element of the received feature vector  at the CP from all RRHs is given by
\begin{equation}
\begin{aligned}
\hat{\mathbf{y}}(d) &= \left[\hat{\mathbf{y}}_1^{\sf T}(d),\cdots,\hat{\mathbf{y}}_M^{\sf T}(d) \right]^{\sf T} \\
& =  \sum\limits_{k=1}^K \mathbf{h}_{k} b_k(d) s_k(d) + \mathbf{z}(d) + \mathbf{q}(d),  \; \forall d \in \mathcal{D},
\end{aligned}
\end{equation}
where $\mathbf{z}(d)=[\mathbf{z}_1^{\sf T}(d),\cdots,\mathbf{z}_M^{\sf T}(d)]^{\sf T}$, $\mathbf{q}(d)=[\mathbf{q}_1^{\sf T}(d),\cdots,\mathbf{q}_M^{\sf T}(d)]^{\sf T}$. To derive a global estimate of the $d$-th element $s(d)$, receive beamforming like in \cite{10167503} is first performed, followed by taking the real part of the processed signal as 
%decode the aggregation signal $\hat{s}(d)$, the CP applies receive beamforming vector to the received $\hat{\mathbf{y}}$'s and take the real part, i.e., 
\begin{equation} \label{aggregation_signal}
\begin{aligned}
\hat{s}(d) &= \mathfrak{R} \left(\mathbf{m}_d^{\sf H} \hat{\mathbf{y}}(d) \right) = \mathfrak{R} \left(\mathbf{m}_d^{\sf H} \sum\limits_{k=1}^K \mathbf{h}_{k} b_k(d) s_k(d)  \right) + \mathbf{n}(d),   
\end{aligned}
\end{equation}
where $\hat{s}(d)$ is the global estimate, $\mathbf{m}_d = [  \mathbf{m}_{d,1}^{\sf T}, \cdots, \mathbf{m}_{d,M}^{\sf T} ]^{\sf T} \in \mathbb{C}^{MN}$ is the receive beamforming vector at time slot $d$, $\mathbf{n}(d) = \mathfrak{R} \left( \mathbf{m}_d^{\sf H}\left( \mathbf{z}(d) + \mathbf{q}(d) \right)\right)$ is the equivalent uplink noise. Given $\mathbf{m}_d$, the equivalent uplink noise is distributed as $\mathbf{n}(d) \sim \mathcal{N}(0,\sigma^2)$ with the variance
\begin{equation}
\begin{aligned}
\sigma^2 = \frac{1}{2} \mathbf{m}_d^{\sf H} \left(\sigma_z^2 \mathbf{I} + \mathbf{Q} \right) \mathbf{m}_d.
\end{aligned}
\end{equation}

\subsection{Discriminant Gain}
As mentioned before, edge inference features task-oriented property as shown in Fig. \ref{fig:dg}, thereby should directly adopt the inference accuracy as the design objective. However, the instantaneous inference accuracy is unknown at the design stage as the input feature is not available on the server. To tackle this problem, an approximate but tractable metric proposed in \cite{lan2021progressive}, called discriminant gain, is adopted as the surrogate for classification tasks. It is derived based on the well-known KL divergence \cite{kullback1997information} and measures the differentiability of different classes in the feature space. Specifically, consider a classification task with $L$ classes, whose ground-true feature distribution is defined in \eqref{pdf}. For an arbitrary pair of classes, say the $\ell$-th and $\ell^{'}$-th classes, the discriminant gain is given by 
\begin{equation}\label{Eq:DG}
\begin{aligned}
G_{\ell,\ell'}(\mathbf{\Tilde{x}}) &= &&{\sf D}_{KL}[\mathcal{N}(\bm{\mu}_{\ell}, \bm{\Sigma}) \ \| \  \mathcal{N}(\bm{\mu}_{\ell'}, \bm{\Sigma})] \\
& &&+ {\sf D}_{KL}[\mathcal{N}(\bm{\mu}_{\ell'}, \bm{\Sigma}) \ \| \  \mathcal{N}(\bm{\mu}_{\ell}, \bm{\Sigma})] \\
&= &&(\bm{\mu}_{\ell} - \bm{\mu}_{\ell'})^{\sf T} \bm{\Sigma}^{-1} (\bm{\mu}_{\ell} - \bm{\mu}_{\ell'}) \\
&=&&\sum\limits_{d=1}^D G_{\ell,\ell'}(\mathbf{\Tilde{x}}(d)), \quad  \forall (\ell, \ell'),
\end{aligned}
\end{equation} 
where $x(d)$ is the $d$-th element of $\mathbf{\Tilde{x}}$ and $G_{\ell,\ell'}(x(d))$ is given as 
\begin{equation}\label{Eq:DGElement}
\begin{aligned}
G_{\ell,\ell'}\left(\mathbf{\Tilde{x}}(d)\right) = \frac{\left( \bm{\mu}_{\ell}(d) -  \bm{\mu}_{\ell'}(d) \right)^2}{\sigma^2_d},  \; \forall d \in \mathcal{D}.
\end{aligned}
\end{equation}
The pair-wise discriminant gain in \eqref{Eq:DG} measures the distance between the class $\ell$ and class $\ell^{'}$ normalized by their covariance. It characterizes the ability of feature vector $\mathbf{\Tilde{x}}$ to distinguish the two classes. In other words, a larger discriminant gain means that the classes are well separated, and thus leading to a higher achievable inference accuracy. Besides, from \eqref{Eq:DGElement}, it is observed different feature elements have different discriminant gains, and thus have heterogeneous contributions on the inference accuracy. To this end, it's desirable to allocate more resources (e.g., power) to make the elements with greater discriminant gains accurately received, which is one of the work's motivations. 

%that the discriminant gains of different feature elements depend on their distributions and are different. 
 
Then, following \cite{lan2021progressive}, the overall discriminant gain is defined as the average of all pair-wise discriminant gains, given as
\begin{equation}
\begin{aligned}
G(\mathbf{\Tilde{x}}) &= \frac{2}{L(L-1)} \sum\limits_{\ell=1}^L \sum\limits_{\ell < \ell'} G_{\ell,\ell'}\left(\mathbf{x}\right) \\
&= \frac{2}{L(L-1)}   \sum\limits_{\ell=1}^L \sum\limits_{\ell < \ell'} \sum\limits_{d=1}^D G_{\ell,\ell'}(\mathbf{x}(d)) \\
&=\sum\limits_{d=1}^D G(\mathbf{\Tilde{x}}(d)), 
\end{aligned}
\end{equation}
where $G(\mathbf{\Tilde{x}}(d))$ is the discriminant gain of the $d$-th feature elements, given as 
\begin{equation}
\begin{aligned}
G(\mathbf{\Tilde{x}}(d)) = \frac{2}{L(L-1)} \sum\limits_{\ell=1}^L \sum\limits_{\ell < \ell'} \frac{\left( \bm{\mu}_{\ell}(d) -  \bm{\mu}_{\ell'}(d) \right)^2}{\sigma^2_d}, \; \forall d \in \mathcal{D}.
\end{aligned}
\end{equation}

\section{Problem Formulation And Simplification}

In this section, we formulated the problem of maximizing discriminant gain under the transmission power and energy constraints of edge devices, as well as the capacity constraint of the fronthaul link between RRH and CP. Subsequently, the transmit side employed a well-known zero-forcing precoding design to derive the distribution of the received features, thereby obtaining a closed-form expression for the discriminant gain in the classification task. This closed-form expression enables the formulated problem to be solved efficiently.

\subsection{Problem Formulation}
For notation simplification, we first define the overall beamforming matrix and scaling matrix  as
\begin{equation}
    \mathbf{M} = \{\mathbf{m}_d, \forall d \in \mathcal{D}\},  \mathbf{B} = \{b_k(d), \forall k \in \mathcal{K}, \forall d \in \mathcal{D}\}.
\end{equation}
Following the task-oriented principle, we aim at maximizing the inference accuracy measured by the overall discriminant gain of the received feature vector at the CP by jointly designing transmit precoding $\mathbf{B}$, on-RRH quantization $\mathbf{Q}$, and on-server beamforming $\mathbf{M}$, as
\begin{equation}
\max_{\mathbf{B},\mathbf{Q}, \mathbf{M}} \;\;
    G =  \sum\limits_{d=1}^D  G\left(\hat{s}_k(d)\right),
\end{equation}
where $\hat{s}_k(d)$  is the $d$-th element of the estimated global feature vector received at the CP defined in \eqref{aggregation_signal}. There are three kinds of constraints, i.e., the transmit power constraints of each device as shown in \eqref{power_constraint2}, the total transmit energy constraint of each device over all times slots as shown in \eqref{energy_cons1}, and the total fronthaul capacity constraint over all RRHs as shown in \eqref{capacity_constraint1}.  Although at first glance the objective function has nothing to do with the optimization variables $\mathbf{B}$, $\mathbf{Q}$, $\mathbf{M}$, the optimization variables influence the objective function by determining the statistical parameters of the estimated global features. In summary, the overall discriminant gain maximization problem is formulated as 

\begin{subequations} \label{obj:original_problem}
\begin{eqnarray}   
\mathscr{P}\!\!:  \mathop{\max}_{\mathbf{B},\mathbf{Q}, \mathbf{M}}   \!\!\!\!\! \!\!\!\!\! 
&& G =  \sum\limits_{d=1}^D  G\left(\hat{s}_k(d)\right)\nonumber \\
\text{ s.t.   }   \!\!\!\!\! \!\!\!\!\!
&& \left| b_k(d) \right|^2 \leq \hat{P}_k, \forall k \in \mathcal{K}, \forall d \in \mathcal{D}, \label{cons:0a}\\
&& \sum\limits_{d=1}^D \sum\limits_{k=1}^K \left(\left| b_k(d)  \right|^2  \mathbb{E} \left[ s^2 _k(d) \right] \cdot T \right) \leq E,  \label{cons:0b}\\
&&  \log \frac{\left| \hat{P} \sum_{k=1}^K \mathbf{h}_k  (\mathbf{h}_{k})^{\sf H} + \sigma_z^2\mathbf{I} + \mathbf{Q} \right|}{\left| \mathbf{Q}\right|}  \leq C \label{cons:0c}. 
\end{eqnarray}
\end{subequations}

 The difficulty in solving the above problem arises from the intractability of the objective function. The derivation of the objective function relies on the distribution of feature elements, which is non-trivial to deal with since it involves the coupling process of precoding, channel, quantization and receive beamforming. To tackle the challenge,  in the following, we will first apply the widely adopted zero-forcing precoding (see, e.g., \cite{wen2019reduced}) design to simplify the problem and thus facilitate the design of subsequent algorithms.

% The difficulty to solve the above problem arises from the following aspects. To begin with, the non-convex and complicated summation form of the objective function makes the optimization variables of different devices and different feature dimensions highly coupled. Besides, the non-convex constraints in \eqref{cons:0a}, \eqref{cons:0b} are non-trivial to deal with, especially the energy constraints involving summation over multiple slots. To tackle these challenges, 
\subsection{Problem Simplification via Zero-Forcing Precoding}

 Without loss of generality, the zero-forcing precoding is adopted to simplify $\mathscr{P}$. Specifically, for each feature dimension $d$, it is given by
\begin{equation}
\mathbf{m}_d^{\sf H} \mathbf{h}_k b_k(d) = c_k(d), \; \forall k\in \mathcal{K}, \; \forall d \in \mathcal{D},
\end{equation}
where we define $\mathbf{C} = \{c_k(d), \forall k \in \mathcal{K}, \forall d \in D\}$ with real-valued element $c_k(d) \geq 0$ representing the receive signal strength from device $k$. Accordingly, the transmit scalar at device $k$ can be derived as
\begin{equation} \label{transmit_scalar_design}
\begin{aligned}
b_k(d) = \frac{c_k(d) (\mathbf{m}_d^{\sf H} \mathbf{h}_k)^{\sf H}}{\left| \mathbf{m}_d^{\sf H} \mathbf{h}_k \right|^2}, \; \forall k\in \mathcal{K}, \; \forall d \in \mathcal{D}.
\end{aligned}
\end{equation}
By substituting the feature vector in \eqref{feature_vector}, the transmission scalar above into $\hat{s}(d)$ in \eqref{aggregation_signal}, it  can be derived as
\begin{equation}\label{estimate}
\begin{aligned}
\hat{s}(d) &= \sum\limits_{k=1}^K c_k(d) s_k(d) + n(d) \\
&=\sum\limits_{k=1}^K c_k(d) \mathbf{\Tilde{x}}(d) + \sum\limits_{k=1}^K c_k(d) \mathbf{\Tilde{e}}_k(d) + n(d).
\end{aligned}
\end{equation}
From \eqref{estimate}, one can observe that the received feature vector is simplified by using the zero-forcing precoding scheme to cancel the interference among different feature elements. This scheme is shown to be effective and is near-optimal when the overall distortion level is low, and is widely adopted in existing designs \cite{wen2019reduced,wiesel2008zero,li2019wirelessly}. Furthermore, as shown in \eqref{estimate}, the zero-forcing precoding scheme allows heterogeneous receive power levels of different feature elements and from different devices. This adaptive power allocation property can provide an extra degree of freedom for enhancing the inference accuracy. 

Based on the simplified form of the received feature vector in \eqref{estimate}, its distribution can be derived as shown in the following lemma.

\begin{lemma}\label{lma:2}
The distribution of the aggregation signal $\hat{s}(d)$ is given by
\begin{equation}
    \begin{aligned}
        \hat{s}(d) \sim \frac{1}{L}\sum\limits_{\ell=1}^L \mathcal{N}\left( \hat{\bm{\mu}}_{\ell}(d), \hat{\sigma}^2_d \right), \; \forall d \in \mathcal{D},
    \end{aligned}
\end{equation}
where the means $\{\hat{\bm{\mu}}_{\ell}(d)\}$ and the variance $\{\hat{\sigma}^2_d\}$ are
\begin{equation} \label{distribution2}
    \left\{  
    \begin{aligned}
        &\hat{\bm{\mu}}_{\ell}(d) = \sum\limits_{k=1}^K c_k(d) \bm{\mu}_{\ell}(d), \\
        &\hat{\sigma}^2_d = \left(\sum\limits_{k=1}^K c_k(d)
         \right)^2 \sigma^2_d  + \sum\limits_{k=1}^K c_k^2(d) \varepsilon_k^2 +  \sigma^2.
    \end{aligned}
    \right.
\end{equation}
\end{lemma}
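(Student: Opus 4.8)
The plan is to condition on the latent class label and exploit the closure of the Gaussian family under independent linear combinations. Starting from the simplified estimate in \eqref{estimate}, I would first regroup $\hat{s}(d)$ into three statistically independent contributions: a scaled copy of the ground-true feature element $\left(\sum_{k=1}^K c_k(d)\right)\mathbf{\Tilde{x}}(d)$, the aggregated sensing-noise term $\sum_{k=1}^K c_k(d)\mathbf{\Tilde{e}}_k(d)$, and the equivalent uplink noise $n(d)$. Here the coefficients $c_k(d)$ are deterministic by the zero-forcing design in \eqref{transmit_scalar_design}, so randomness enters only through $\mathbf{\Tilde{x}}(d)$, the per-device projected noises $\mathbf{\Tilde{e}}_k(d)$, and $n(d)$.

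Next I would invoke the law of total probability over the $L$ mixture components. By the feature-distribution assumption \eqref{pdf} together with the diagonal structure $\bm{\Sigma} = \text{diag}\{\sigma_1^2,\dots,\sigma_D^2\}$, conditioning on class $\ell$ gives $\mathbf{\Tilde{x}}(d)$ distributed as $\mathcal{N}(\bm{\mu}_\ell(d),\sigma_d^2)$. Conditioned on class $\ell$, the three contributions are jointly Gaussian and independent, so their sum is Gaussian with additive means and variances. The scaled feature term contributes mean $\left(\sum_k c_k(d)\right)\bm{\mu}_\ell(d)$ and variance $\left(\sum_k c_k(d)\right)^2\sigma_d^2$; each summand $c_k(d)\mathbf{\Tilde{e}}_k(d)$ is $\mathcal{N}(0,c_k^2(d)\varepsilon_k^2)$ by \eqref{Eq:DistortionDistribution}, so independence across $k$ makes the sensing term contribute variance $\sum_k c_k^2(d)\varepsilon_k^2$; and $n(d)\sim\mathcal{N}(0,\sigma^2)$ contributes $\sigma^2$. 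Collecting these yields exactly $\hat{\bm{\mu}}_\ell(d)$ and $\hat{\sigma}_d^2$ in \eqref{distribution2}, with $\hat{\sigma}_d^2$ independent of $\ell$ precisely because the per-class covariance is shared across components.

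Finally, averaging the class-conditional Gaussians against the uniform prior $1/L$ reassembles the claimed Gaussian-mixture law for $\hat{s}(d)$. The step requiring the most care is justifying the mutual independence used in the variance addition: I must argue that the ground-true feature $\mathbf{\Tilde{x}}(d)$ is independent of every sensing noise $\mathbf{\Tilde{e}}_k(d)$, that the $\mathbf{\Tilde{e}}_k(d)$ are independent across devices (each device senses independently per \eqref{sensory_data}--\eqref{distortion}), and that the equivalent uplink noise $n(d)$, being a linear functional of the channel AWGN and the quantization noise, is independent of both the data and the sensing noise. Once this independence is secured, the remainder is routine manipulation of Gaussian moments, and the diagonality of $\bm{\Sigma}$ guarantees that no cross-dimension coupling contaminates the $d$-th marginal.
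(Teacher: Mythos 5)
Your proposal is correct and follows essentially the same route as the paper's Appendix B: decompose $\hat{s}(d)$ into the scaled feature term, the aggregated sensing noise, and the equivalent uplink noise, obtain the class-conditional Gaussian by independence and additivity of means and variances, and reassemble the uniform mixture. If anything, your explicit conditioning on the latent class label via the law of total probability is a cleaner justification than the paper's formal ``average of $L$ Gaussian random variables'' notation, and your attention to the independence assumptions (feature vs.\ sensing noise vs.\ channel/quantization noise) makes explicit what the paper leaves implicit.
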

\begin{proof}
Please see Appendix B.
\end{proof}
It follows that the discriminant gain of the received feature can be derived as
\begin{equation}
    G =  \sum\limits_{d=1}^D  G\left(\hat{s}_k(d)\right) = \frac{2}{L(L-1)} \sum\limits_{d=1}^D  \sum\limits_{\ell=1}^L \sum\limits_{\ell < \ell'} \frac{\left( \hat{\bm{\mu}}_{\ell}(d) -  \hat{\bm{\mu}}_{\ell'}(d) \right)^2}{\hat{\sigma}^2_d}.
\end{equation}
Moreover, the zero-forcing precoding also simplifies the transmit power constraint of each device in the following form. 
\begin{equation} \label{power_constraint3}
\begin{aligned}
\frac{c^2_k(d)}{\hat{P}_k} \leq  \left| \mathbf{m}_d^{\sf H} \mathbf{h}_k \right|^2, \; \forall k\in \mathcal{K}, \; \forall d \in \mathcal{D}.
\end{aligned}
\end{equation}
Likewise,  by substituting the transmission scalar in \eqref{transmit_scalar_design} into the energy constraints of all devices, they can be derived as
\begin{equation} \label{energy_cons2}
\begin{aligned}
    \sum\limits_{d=1}^D \sum\limits_{k=1}^K \frac{c_k^2(d)}{\left| \mathbf{m}_d^{\sf H} \mathbf{h}_k \right|^2} \cdot \mathbb{E} \left[ s^2_k(d)  \right] \leq \frac{E}{T}.
\end{aligned}
\end{equation}
In summary, by applying the zero-forcing precoding, the original discriminant gain maximization problem $\mathscr{P}$  can be simplified as
\begin{subequations} \label{obj:1}
\begin{eqnarray} 
    \mathscr{P}_1\!\!:  \mathop{\text{max}}_{\substack{\mathbf{C},\mathbf{M},  \mathbf{Q}}} \!\!\!\!\! \!\!\!\!\! 
    && G = \frac{2}{L(L-1)} \sum\limits_{d=1}^D  \sum\limits_{\ell=1}^L \sum\limits_{\ell < \ell'} \frac{\left( \hat{\bm{\mu}}_{\ell}(d) -  \hat{\bm{\mu}}_{\ell'}(d) \right)^2}{\hat{\sigma}^2_d} \nonumber\\
    \text{ s.t. } \!\!\!\!\! \!\!\!\!\! 
    && \frac{c_k^2(d)}{\hat{P}_k} \leq \left| \mathbf{m}_d^{\sf H} \mathbf{h}_k \right|^2,  \forall k\in \mathcal{K},  \forall d \in \mathcal{D}, \label{cons:1a}\\
    && \sum\limits_{d=1}^D \sum\limits_{k=1}^K \frac{c_k^2(d)}{\left| \mathbf{m}_d^{\sf H} \mathbf{h}_k \right|^2} \cdot \mathbb{E} \left[ s^2_k(d)  \right] \leq \frac{E}{T},  \label{cons:1b}\\
    &&  \log \frac{\left| \hat{P} \sum_{k=1}^K \mathbf{h}_k  (\mathbf{h}_{k})^{\sf H} + \sigma_z^2\mathbf{I} + \mathbf{Q} \right|}{\left| \mathbf{Q}\right|}  \leq C \label{cons:1c}. 
\end{eqnarray}  
\end{subequations}

The problem in $\mathscr{P}_1$ is still non-convex due to the non-convexity of the objective function and the long-term energy constraints in terms of $c_k(d)$ and ${\bf m}_d$. In the next section, we will illustrate how the simplified problem can be efficiently solved. 
\section{Algorithm Development}

In this section, we developed an efficient algorithm to solve the simplified problem. By applying some variable transformations, the simplified problem is transformed into an equivalent form, which allows us to obtain a sub-optimal solution using successive convex approximation (SCA) and alternating optimization techniques.
Besides, the convergence analysis of the algorithm is also provided at the end.

\subsection{Variables Transformation}
To simplify problem $\mathscr{P}_1$, we introduce auxiliary variables $\mathbf{A} = \{\alpha(d), \cdots, \alpha(d)\}$ with $\alpha(d)$ representing the average discriminant gain  on all class pairs of the $d$-th feature element, which can be given as
\begin{equation} \label{variables:alpha}
\alpha(d) = \frac{2}{L(L-1)}\sum\limits_{\ell=1}^L \sum\limits_{\ell < \ell'} \frac{\left( \hat{\bm{\mu}}_{\ell}(d) -  \hat{\bm{\mu}}_{\ell'}(d) \right)^2}{\hat{\sigma}^2_d},  \; \forall d \in \mathcal{D}.
\end{equation}  
By substituting $\hat{\bm{\mu}}_{\ell}(d)$ and $\hat{\sigma}^2_d$ in \eqref{distribution2} into the constraint \eqref{variables:alpha}, it can be derived as 
\begin{equation} \label{cons:33b}
    \Lambda(\{c_k(d)\}, \{\mathbf{m}_d\}, \mathbf{Q}) = \Gamma_1(\alpha(d), \{c_k(d)\}), 
\end{equation}
where 
\begin{equation}
\begin{aligned}
    &  \Lambda(\{c_k(d)\}, \mathbf{m}_d, \mathbf{Q}) =      \\
    % & \quad \quad  \quad \quad \quad  \quad  \quad \quad \quad \quad + \frac{1}{2} \mathbf{m}_d^{\sf H} \left(\sigma_z^2 \mathbf{I} + \mathbf{Q} \right) \mathbf{m}_d, \; \forall d \in \mathcal{D},  \\
    & \quad \quad \frac{\Big( \sum\limits_{k=1}^K c_k(d) \Big)^2 \sigma^2_d + \sum\limits_{k=1}^K c_k^2(d) \varepsilon_k^2 + \frac{1}{2} \mathbf{m}_d^{\sf H} \left(\sigma_z^2 \mathbf{I} + \mathbf{Q} \right) \mathbf{m}_d}{\frac{2}{L(L-1)} \sum\limits_{\ell=1}^L \sum\limits_{\ell < \ell'}\left({\bm{\mu}}_{\ell}(d)- {\bm{\mu}}_{\ell'}(d) \right)^2}, \\
    & \Gamma_1(\alpha(d), \{c_k(d)\}) = \frac{\Big( \sum\limits_{k=1}^K c_k(d) \Big)^2 }{\alpha(d)}, \; \forall d \in \mathcal{D}.
\end{aligned}
\end{equation}

Next, we can extend the feasible region of the equality constraint \eqref{cons:33b} as below while keeping the same optimal solution to  $\mathscr{P}_1$, which is shown in Lemma $3$.
\begin{equation} \label{cons:333b}
    \Lambda(\{c_k(d)\}, \{\mathbf{m}_d\}, \mathbf{Q}) \leq \Gamma_1(\alpha(d), \{c_k(d)\}). 
\end{equation}
% \begin{multline} \label{cons:333b}
%     \!\!\!\!\!\!\!\!\!\!\!
%     \alpha(d) \leq \frac{\frac{2}{L(L-1)} \Big( \sum\limits_{k=1}^K c_k(d) \Big)^2 \sum\limits_{\ell=1}^L \sum\limits_{\ell < \ell'}\left({\bm{\mu}}_{\ell}(d)- {\bm{\mu}}_{\ell'}(d) \right)^2}{\Big( \sum\limits_{k=1}^K c_k(d) \Big)^2 \sigma^2_d + \sum\limits_{k=1}^K c_k^2(d) \varepsilon_k^2 + \frac{1}{2} \mathbf{m}_d^{\sf H} \left(\sigma_z^2 \mathbf{I} + \mathbf{Q} \right) \mathbf{m}_d},  \\
%     \forall d \in \mathcal{D}.
% \end{multline}
\begin{lemma} \label{lma:3}
   A new problem $ \mathscr{P}_1'$ which extends the feasible region of \eqref{cons:33b} into \eqref{cons:333b} and remaining the same objective function and other constraints reaches the same optimal solution as $ \mathscr{P}_1$.
\end{lemma}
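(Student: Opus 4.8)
The plan is to use the standard argument that relaxing an equality constraint to an inequality is lossless whenever the objective monotonically drives the relaxed constraint back to its boundary. First I would observe that every feasible point of $\mathscr{P}_1$ is automatically feasible for $\mathscr{P}_1'$, since the equality \eqref{cons:33b} trivially implies the inequality \eqref{cons:333b} while the objective \eqref{cons:1a}--\eqref{cons:1c} and all remaining constraints are unchanged; hence the optimal value of $\mathscr{P}_1'$ is at least that of $\mathscr{P}_1$. It then remains to show the reverse, for which I would argue that any optimizer of $\mathscr{P}_1'$ must in fact satisfy \eqref{cons:333b} with equality in every dimension $d$, so that it is already feasible for $\mathscr{P}_1$.

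To establish tightness at the optimum, I would examine where the auxiliary variable $\alpha(d)$ appears. Inspecting $\mathscr{P}_1'$, the variable $\alpha(d)$ occurs only in the objective $G=\sum_{d=1}^D\alpha(d)$ and inside $\Gamma_1(\alpha(d),\{c_k(d)\})=\big(\sum_{k=1}^K c_k(d)\big)^2/\alpha(d)$ in the relaxed constraint \eqref{cons:333b}; it enters neither the power constraint \eqref{cons:1a}, the energy constraint \eqref{cons:1b}, nor the fronthaul constraint \eqref{cons:1c}. The key monotonicity is then that $G$ is strictly increasing in each $\alpha(d)$, whereas $\Gamma_1$ is strictly decreasing in $\alpha(d)$ (for $\sum_k c_k(d)>0$), so that raising $\alpha(d)$ both improves the objective and relaxes the constraint $\Lambda\le\Gamma_1$.

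With this structure in hand I would conclude by contradiction: suppose an optimizer $(\mathbf{C}^\star,\mathbf{M}^\star,\mathbf{Q}^\star,\mathbf{A}^\star)$ of $\mathscr{P}_1'$ had $\Lambda<\Gamma_1$ strictly for some $d$. Then $\alpha(d)$ could be increased by a small amount, which strictly raises the objective while preserving $\Lambda\le\Gamma_1$ and leaving every other constraint untouched (as they do not involve $\alpha(d)$), contradicting optimality. Therefore every optimizer of $\mathscr{P}_1'$ attains \eqref{cons:333b} with equality, hence is feasible for $\mathscr{P}_1$ and achieves the same objective value; combined with the first direction, the two problems share the same optimal value and the same optimal solution.

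The step I expect to be the main obstacle is the bookkeeping of the monotonicity claim: precisely verifying that $\alpha(d)$ is absent from all constraints except \eqref{cons:333b}, and that the sign of its effect on $\Gamma_1$ is such that increasing $\alpha(d)$ moves the constraint strictly into the feasible region. Once this monotone dependence is pinned down, the perturbation-to-equality argument is routine.
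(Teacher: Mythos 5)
Your proposal is correct and follows essentially the same route as the paper: the paper's Appendix C likewise argues by contradiction that any optimizer of $\mathscr{P}_1'$ with strict slack in \eqref{cons:333b} for some $d'$ admits a slightly larger $\alpha^*_{+}(d') = (1+\eta)\alpha^*(d')$ that remains feasible by continuity of $\Gamma_1$ in $\alpha(d)$ and strictly increases the objective. One small slip: since $\Gamma_1(\alpha(d),\{c_k(d)\}) = \bigl(\sum_k c_k(d)\bigr)^2/\alpha(d)$ is \emph{decreasing} in $\alpha(d)$, raising $\alpha(d)$ \emph{tightens} the constraint $\Lambda \le \Gamma_1$ rather than relaxing it; your perturbation step in the final paragraph is nonetheless valid because it only needs the strict slack plus continuity to absorb a sufficiently small increase, which is exactly the paper's argument.
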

\begin{proof}
    Please see Appendix C.
\end{proof}

Nevertheless, the simplified problem is still very difficult to solve due to the high couple of variables across multiple time slots in the constraints \eqref{cons:1b}. To make this problem feasible, we further introduce auxiliary variables $\mathbf{B} = \left[\beta_{1,1},\beta_{1,2},\cdots, \beta_{k,d} \right]^{\sf T}$ as upper bound such that the following inequality  holds\footnote{The term $\mathbb{E}[s^2_k(d)]$ is omitted just as $\hat{P}_k$ does.}  
\begin{equation} \label{cons:1223}
   \frac{c_k^2(d)}{\left| \mathbf{m}_d^{\sf H} \mathbf{h}_k \right|^2} \leq \beta_{k,d}, \; \; \forall k\in \mathcal{K}, \; \forall d \in \mathcal{D},
\end{equation}
\begin{lemma} \label{lma:4}
    Based on the defined auxiliary variables, the energy constraint term can equivalently be written  as
    \begin{subequations}
    \begin{align}
        &\frac{c_k^2(d)}{\beta_{k,d} } \leq  \left| \mathbf{m}_d^{\sf H} \mathbf{h}_k \right|^2, \; \forall k\in \mathcal{K}, \; \forall d \in \mathcal{D}, \label{cons:2a}\\
        &\sum\limits_{d=1}^D \sum\limits_{k=1}^K \beta_{k,d} \leq E  . \label{cons:2b}
    \end{align}
\end{subequations}
\end{lemma}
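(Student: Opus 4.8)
The plan is to read Lemma~\ref{lma:4} as an epigraph-style slack-variable reformulation and to prove it by showing that projecting the reformulated feasible set back onto the original variables $(\mathbf{C},\mathbf{M})$ recovers exactly the feasible set cut out by the coupled energy budget \eqref{cons:1b}. Up to the normalization in the footnote (which folds $\mathbb{E}[s_k^2(d)]$, and the constant $T$, into the budget), \eqref{cons:1b} reads $\sum_{d=1}^D\sum_{k=1}^K \frac{c_k^2(d)}{\left| \mathbf{m}_d^{\sf H} \mathbf{h}_k \right|^2} \le E$, and the claim is that this single constraint holds for $(\mathbf{C},\mathbf{M})$ if and only if there exist $\{\beta_{k,d}\}$ satisfying \eqref{cons:2a} and \eqref{cons:2b}.

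First I would record the purely algebraic fact that, under the standing assumption $c_k(d)\ge 0$ together with the non-degeneracy $\left| \mathbf{m}_d^{\sf H} \mathbf{h}_k \right|^2>0$, constraint \eqref{cons:2a} is merely \eqref{cons:1223} after cross-multiplication: for $\beta_{k,d}>0$ one has $\frac{c_k^2(d)}{\beta_{k,d}}\le \left| \mathbf{m}_d^{\sf H} \mathbf{h}_k \right|^2 \Longleftrightarrow \frac{c_k^2(d)}{\left| \mathbf{m}_d^{\sf H} \mathbf{h}_k \right|^2}\le \beta_{k,d}$. Hence \eqref{cons:2a} imposes a per-term lower bound on each $\beta_{k,d}$, while \eqref{cons:2b} imposes a single upper bound on their sum, so the auxiliary variables are sandwiched between these two bounds.

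The core of the proof is then a two-way implication. For the forward direction, given $(\mathbf{C},\mathbf{M})$ feasible for \eqref{cons:1b}, I would set each auxiliary variable to its tightest admissible value $\beta_{k,d}=\frac{c_k^2(d)}{\left| \mathbf{m}_d^{\sf H} \mathbf{h}_k \right|^2}$; then \eqref{cons:2a} holds with equality and $\sum_{d,k}\beta_{k,d}=\sum_{d,k}\frac{c_k^2(d)}{\left| \mathbf{m}_d^{\sf H} \mathbf{h}_k \right|^2}\le E$ gives \eqref{cons:2b}. For the converse, if some $\{\beta_{k,d}\}$ satisfies both \eqref{cons:2a} and \eqref{cons:2b}, then summing the per-term lower bounds and chaining with the budget yields $\sum_{d,k}\frac{c_k^2(d)}{\left| \mathbf{m}_d^{\sf H} \mathbf{h}_k \right|^2}\le \sum_{d,k}\beta_{k,d}\le E$, which is \eqref{cons:1b}. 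Since neither the objective nor the remaining constraints \eqref{cons:1a} and \eqref{cons:1c} involve $\beta_{k,d}$, the lifted problem and $\mathscr{P}_1'$ attain the same optimal value and share the same optimizer in $(\mathbf{C},\mathbf{M},\mathbf{Q})$.

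I do not expect a genuine analytic obstacle here, as this is a standard slack-variable lifting; the only point demanding care is the well-definedness of the divisions, i.e.\ excluding the degenerate cases $c_k(d)=0$ (equivalently $\beta_{k,d}=0$) and $\left| \mathbf{m}_d^{\sf H} \mathbf{h}_k \right|^2=0$. I would dispatch these quickly: when $c_k(d)=0$ both sides vanish and any $\beta_{k,d}\ge 0$ is admissible, while $\left| \mathbf{m}_d^{\sf H} \mathbf{h}_k \right|^2=0$ with $c_k(d)>0$ makes \eqref{cons:2a} infeasible and so may be excluded. The real value of the reformulation---worth flagging as motivation rather than a difficulty---is that the left-hand side of \eqref{cons:2a} is the jointly convex quadratic-over-linear function $c_k^2(d)/\beta_{k,d}$, so the reformulation replaces the cross-slot-coupled budget \eqref{cons:1b} by per-term constraints that the subsequent SCA scheme can treat tractably.
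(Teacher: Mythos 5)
Your proof is correct and follows essentially the same route as the paper's Appendix D: the forward direction chooses the tight value $\beta_{k,d}=c_k^2(d)/\left| \mathbf{m}_d^{\sf H} \mathbf{h}_k \right|^2$, and the converse cross-multiplies \eqref{cons:2a} into \eqref{cons:1223} and sums over $k$ and $d$ against the budget \eqref{cons:2b}. Your additional handling of the degenerate cases $c_k(d)=0$ and $\left| \mathbf{m}_d^{\sf H} \mathbf{h}_k \right|^2=0$ is a harmless refinement the paper omits.
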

\begin{proof}
    Please see Appendix D.
\end{proof}
Therefore, problem \eqref{obj:1} is further reduced to

\begin{subequations} \label{obj:2}
\begin{eqnarray}
\mathscr{P}_2\!\!: 
\mathop{\text{max}}_{\substack{\mathbf{A},\mathbf{B}, \mathbf{C} \\ \mathbf{M},\mathbf{Q}}}  \!\!\!\!\! \!\!\!\!\! 
&& G =  \sum\limits_{d=1}^D  \alpha(d)   \nonumber \\
\text{ s.t. }  \!\!\!\!\! \!\!\!\!\! 
&& 
\frac{c_k^2(d)}{\hat{P}_k}  \leq  \left| \mathbf{m}_d^{\sf H} \mathbf{h}_k \right|^2, \forall k\in \mathcal{K},  \forall d \in \mathcal{D}, \\
&&
\frac{c_k^2(d)}{\beta_{k,d} } \leq \left| \mathbf{m}_d^{\sf H} \mathbf{h}_k \right|^2,  \forall k\in \mathcal{K},  \forall d \in \mathcal{D}, \\
&&
\sum\limits_{d=1}^D \sum\limits_{k=1}^K \beta_{k,d} \leq E, \\
&&  
\log \frac{\left| \hat{P} \sum_{k=1}^K \mathbf{h}_k  (\mathbf{h}_{k})^{\sf H} + \sigma_z^2\mathbf{I} + \mathbf{Q} \right|}{\left| \mathbf{Q}\right|}  \leq C , \\
&&
\Lambda(\{c_k(d)\}, \{\mathbf{m}_d\}, \mathbf{Q}) \leq \Gamma_1(\alpha(d), \{c_k(d)\}),   \nonumber \\
&&
\quad\quad\quad\quad\quad\quad\quad\quad\quad\quad\quad\quad\quad
\forall d \in \mathcal{D}. \label{cons:42d}
  % &&
  % \resizebox{0.8\linewidth}{!}{$\alpha(d) \leq \frac{\frac{2}{L(L-1)} \left( \sum\limits_{k=1}^K c_k(d) \right)^2 \sum\limits_{\ell=1}^L \sum\limits_{\ell < \ell'}\left({\bm{\mu}}_{\ell}(d)- {\bm{\mu}}_{\ell'}(d) \right)^2}{\left( \sum\limits_{k=1}^K c_k(d) \right)^2 \sigma^2_d + \sum\limits_{k=1}^K c_k^2(d) \varepsilon_k^2 + \frac{1}{2} \mathbf{m}_d^{\sf H} \left(\sigma_z^2 \mathbf{I} + \mathbf{Q} \right) \mathbf{m}_d}$}, \nonumber \\
  % &&
  % \quad\quad\quad\quad\quad\quad\quad\quad\quad\quad\quad\quad
  % \forall d \in \mathcal{D}. \label{cons:42d}
\end{eqnarray}
\end{subequations}

\subsection{Alternating Optimization Approach}

In this part, we shall propose an alternating optimization approach to solve problem \eqref{obj:2} for obtaining a suboptimal solution. Specifically, the problem can be split into two subproblems to be solved iteratively. One subproblem fixes the quantization matrix $\mathbf{Q}$ and jointly optimizes the transmit precoding matrix $\mathbf{C}$ and receive beamforming matrix $\mathbf{M}$, while the other fixes other variables and optimizes the quantization matrix $\mathbf{Q}$. The proposed algorithm is summarized in Algorithm \ref{alg:1}.

\subsubsection{Subproblem 1}
With fixed $\mathbf{Q}$, problem \eqref{obj:2} is reduced to the following problem:
\begin{subequations} \label{obj:4}
\begin{eqnarray}
\mathscr{P}_{2.1}\!\!: 
\mathop{\text{max}}_{\substack{\mathbf{A},\mathbf{B} \\ \mathbf{C},   \mathbf{M}}} \!\!\!\!\! \!\!\!\!\! 
&& G =  \sum\limits_{d=1}^D  \alpha(d)  \nonumber \\
\text{ s.t. }  \!\!\!\!\! \!\!\!\!\!
&& 
\frac{c_k^2(d)}{\hat{P}_k} \leq \left| \mathbf{m}_d^{\sf H} \mathbf{h}_k \right|^2, \; \forall k\in \mathcal{K}, \; \forall d \in \mathcal{D}, \\
&&
\frac{c_k^2(d)}{\beta_{k,d} } \leq  \left| \mathbf{m}_d^{\sf H} \mathbf{h}_k \right|^2, \; \forall k\in \mathcal{K}, \; \forall d \in \mathcal{D}, \\
&&
\sum\limits_{d=1}^D \sum\limits_{k=1}^K \beta_{k,d} \leq E, \\
% &&
%   \resizebox{0.8\linewidth}{!}{$\alpha(d) \leq \frac{\frac{2}{L(L-1)} \left( \sum\limits_{k=1}^K c_k(d) \right)^2 \sum\limits_{\ell=1}^L \sum\limits_{\ell < \ell'}\left({\bm{\mu}}_{\ell}(d)- {\bm{\mu}}_{\ell'}(d) \right)^2}{\left( \sum\limits_{k=1}^K c_k(d) \right)^2 \sigma^2_d + \sum\limits_{k=1}^K c_k^2(d) \varepsilon_k^2 + \frac{1}{2} \mathbf{m}_d^{\sf H} \left(\sigma_z^2 \mathbf{I} + \mathbf{Q} \right) \mathbf{m}_d}$}, \nonumber \\
&&
\Lambda(\{c_k(d)\}, \{\mathbf{m}_d\}, \mathbf{Q}) \leq \Gamma_1(\alpha(d), \{c_k(d)\}),   \nonumber \\
&&
\quad\quad\quad\quad\quad\quad\quad\quad\quad\quad\quad\quad\quad
\forall d \in \mathcal{D}. 
\end{eqnarray}
\end{subequations}

\begin{algorithm}
 \caption{Proposed Algorithm for Solving Problem $\mathscr{P}$}
 \begin{algorithmic}[1]  \label{alg:1}
    \REQUIRE {Initial points $\mathbf{A}^{[0]}$, $\mathbf{B}^{[0]}$,$\mathbf{C}^{[0]}$, $\mathbf{M}^{[0]}$, $\mathbf{Q}^{[0]}$ and solution precision $\epsilon$.}
    \STATE {Set $t = 0$.}
    \REPEAT
    \STATE {Solving problem 
    \eqref{obj:6} for given $\mathbf{Q}^{[t]}$, and denote the updating solution as $\{\mathbf{A}^{[t+1/2]}, \mathbf{B}^{[t+1]},
    \mathbf{C}^{[t+1]}, \mathbf{M}^{[t+1]}\}$;}
    \STATE {Solving problem 
    \eqref{obj:7} for given $\mathbf{C}^{[t+1]}$, $\mathbf{M}^{[t+1]}$, and denote the updating solution as $\{\mathbf{A}^{[t+1]}, \ \mathbf{Q}^{[t+1]}\}$;}
    \STATE {Compute discirminant gain $G$;}
    \STATE {Set $t = t+1$;}
     \UNTIL{the increase of the discriminant gain is below the given threshold $\epsilon$.}
    \ENSURE $\mathbf{A}$, $\mathbf{B}$, $\mathbf{C}$, $\mathbf{M}$ and $\mathbf{Q}$.
 \end{algorithmic}
\end{algorithm}
 
Although the objective function is convex, it is still challenging to solve problem \eqref{obj:4} due to the non-convex constraints. In general, there is no standard method for solving such non-convex optimization problems optimally. Herein we adopt the SCA technique to solve problem \eqref{obj:4}. To apply the SCA approach, we convert problem \eqref{obj:4} from the complex domain to the real domain with the following variables:
\begin{subequations}
    \begin{align}
        &\tilde{\mathbf{m}}_d =  \left[\Re(\mathbf{m}_d)^{\sf T}, \Im(\mathbf{m}_d)^{\sf T}\right]^{\sf T},  \forall d \in \mathcal{D}, \\
       &\tilde{\mathbf{H}}_k = 
       \begin{bmatrix}
            \Re(\mathbf{h}_k \mathbf{h}_k^{\sf H}) &     -\Im(\mathbf{h}_k \mathbf{h}_k^{\sf H}) \\
            \Im(\mathbf{h}_k \mathbf{h}_k^{\sf H}) &
            \Re(\mathbf{h}_k \mathbf{h}_k^{\sf H})
       \end{bmatrix}, 
       \forall k \in \mathcal{K}, \\
        &\tilde{\mathbf{Q}} = 
        \begin{bmatrix}
            \Re(\tilde{\mathbf{Q}}) & -\Im(\tilde{\mathbf{Q}}) \\
            \Im(\tilde{\mathbf{Q}}) &
            \Re(\tilde{\mathbf{Q}}) \\
        \end{bmatrix}.
    \end{align}
\end{subequations}
The problem \eqref{obj:4} can be reformulated as follows:
\begin{subequations} \label{obj:55}
\begin{eqnarray}
\mathop{\text{max}}_{\substack{\mathbf{A},\mathbf{B} \\ \mathbf{C}, \tilde{\mathbf{M}}}}  \!\!\!\!\! \!\!\!\!\! 
&& G =  \sum\limits_{d=1}^D  \alpha(d)  \nonumber\\
\text{ s.t. } \!\!\!\!\! \!\!\!\!\! 
% && \text{constraints } \eqref{cons:2b}, \nonumber\\
&& \frac{c_k^2(d)}{\hat{P}_k} \leq  \tilde{\mathbf{m}}_d^{\sf T}\tilde{\mathbf{H}}_k \tilde{\mathbf{m}}_d, \forall k \in \mathcal{K}, \forall d \in \mathcal{D},\\
&& \frac{c_k^2(d)}{\beta_{k,d} } \leq  \tilde{\mathbf{m}}_d^{\sf T}\tilde{\mathbf{H}}_k \tilde{\mathbf{m}}_d, \forall k \in \mathcal{K}, \forall d \in \mathcal{D}, \\
&& \sum\limits_{d=1}^D \sum\limits_{k=1}^K \beta_{k,d} \leq E, \\
% &&  \resizebox{0.92\linewidth}{!}{$\frac{\left( \sum\limits_{k=1}^K c_k(d) \right)^2 \sigma_d^2 + \sum\limits_{k=1}^K c_k^2(d) \varepsilon_k^2 + \frac{1}{2} \tilde{\mathbf{m}}_d^{\sf H} \left(\sigma_z^2 \mathbf{I} + \tilde{\mathbf{Q}} \right) \tilde{\mathbf{m}}_d}{\frac{2}{L(L-1)}\sum\limits_{\ell=1}^L \sum\limits_{\ell < \ell'}\left({\bm{\mu}}_{\ell}(d)- {\bm{\mu}}_{\ell'}(d) \right)^2}  
% $} \nonumber \\ 
% &&
%   \quad\quad\quad\quad\quad\quad\quad\quad\quad \leq \frac{\left( \sum\limits_{k=1}^K c_k(d) \right)^2}{\alpha(d)}, \forall d \in \mathcal{D}.
&&
\Lambda(\{c_k(d)\}, \{\tilde{\mathbf{m}}_d\}, \mathbf{Q}) \leq \Gamma_1(\alpha(d), \{c_k(d)\}),   \nonumber \\
&&
\quad\quad\quad\quad\quad\quad\quad\quad\quad\quad\quad\quad\quad\quad\quad
\forall d \in \mathcal{D}. 
\end{eqnarray}
\end{subequations}
Next we also define  
\begin{equation}
    \Gamma_2(\tilde{\mathbf{m}}_d) = \tilde{\mathbf{m}}_d^{\sf T}\tilde{\mathbf{H}}_k \tilde{\mathbf{m}}_d, \forall k \in \mathcal{K}, \forall d \in \mathcal{D}.
\end{equation}
% \begin{equation}
%         f_{k,d}(\tilde{\mathbf{m}}_d) =  \tilde{\mathbf{m}}_d^{\sf T}\tilde{\mathbf{H}}_k \tilde{\mathbf{m}}_d, \forall k \in \mathcal{K}, \forall d \in \mathcal{D},\\
% \end{equation}
% \begin{equation}
%         g_{d}\left(\{c_k\},\alpha(d)\right) = \frac{\left( \sum\limits_{k=1}^K c_k(d) \right)^2}{\alpha(d)} ,  \forall d \in \mathcal{D}.
% \end{equation}
and then the following lemma is obtained.
\begin{lemma}
% Given the reference point  ${\mathbf{A}}^{[t]}, {\mathbf{C}}^{[t]}, \tilde{\mathbf{M}}^{[t]}$ in  the $t$-th iteration, the function $\{f_{k,d}(\tilde{\mathbf{m}}_d)\}$, $\{ g_{d}\left(\{c_k\},\alpha(d)\right)\}$ is lower bounded by their respective first-order Taylor expansion, i.e., 
Given the reference point  ${\mathbf{A}}^{[t]}, {\mathbf{C}}^{[t]}, \tilde{\mathbf{M}}^{[t]}$ in  the $t$-th iteration, the function $\Gamma_1(\alpha(d), \{c_k(d)\})$, $\Gamma_2(\tilde{\mathbf{m}}_d)$ is lower bounded by their respective first-order Taylor expansion, i.e., 
% \begin{equation}
% \begin{aligned}
%    & f_{k,d}(\tilde{\mathbf{m}}_d) \geq \hat{f}_{k,d}(\tilde{\mathbf{m}}_d^{[t]})
%    \\ = & {f}_{k,d}(\tilde{\mathbf{m}}_d^{[t]}) + \nabla_{\tilde{\mathbf{m}}_d} {f}_{k,d}(\tilde{\mathbf{m}}_d^{[t]})^{\sf T}(\tilde{\mathbf{m}}_d- \tilde{\mathbf{m}}_d^{[t]}) \\
%    = &
%    (2\tilde{\mathbf{H}}_k\tilde{\mathbf{m}}_d^{[t]})^{\sf T} \tilde{\mathbf{m}}_d - (\tilde{\mathbf{m}}_d^{[t]})^{\sf T}\tilde{\mathbf{H}}_k\tilde{\mathbf{m}}_d^{[t]}, \forall k \in \mathcal{K}, \forall d \in \mathcal{D},\\
% \end{aligned}
% \end{equation}
% \begin{equation}
% \begin{aligned}
%       \!\!
%       & {g}_{d}(\{c_k\},\alpha(d)) \geq \hat{g}_{d} (\{c_k^{[t]}\},\alpha^{[t]}(d))\\
%       =& {g}_{d}(\{c_k^{[t]}\},\alpha^{[t]}(d)) +  \nabla_{\alpha(d)}{g}_{d}(\{c_k^{[t]}\},\alpha^{[t]}(d))(\alpha(d) - \alpha^{[t]}(d))  \\
%       & +  \sum\limits_{k=1}^K  \nabla_{c_k(d)}{g}_{d}(\{c_k^{[t]}\},\alpha^{[t]}(d))(c_k(d) - c_k^{[t]}(d)) , \forall d \in \mathcal{D},
% \end{aligned}
% \end{equation}
% \text{where} 
% \begin{equation}
% \begin{aligned}
%     & \nabla_{\alpha(d)}{g}_{d}(\{c_k^{[t]}\},\alpha^{[t]}(d)) = - \Bigg( \frac{ \sum\limits_{k=1}^K c_k^{[t]}(d)}{\alpha^{[t]}(d)} \Bigg)^2, \forall d \in \mathcal{D}, \\
%     & \nabla_{c_k(d)}{g}_{d}(\{c_k^{[t]}\},\alpha^{[t]}(d)) =  \frac{2\sum\limits_{k=1}^K c_k^{[t]}(d) }{\alpha^{[t]}(d)}, \forall d \in \mathcal{D}.
% \end{aligned}
% \end{equation}
\begin{equation}
\begin{aligned}
    \quad
    & \Gamma_1(\alpha(d), \{c_k(d)\})  \geq \hat{\Gamma}_1(\alpha^{[t]}(d), \{c_k^{[t]}\}) \\
    = \;\; & \Gamma_1(\alpha^{[t]}(d), \{c_k^{[t]}\}) + \frac{\partial \Gamma_1(\alpha^{[t]}(d), \{c_k^{[t]}\})}{\partial \alpha(d)} (\alpha(d) - \alpha^{[t]}(d)) \\ 
    \;\; & + \sum\limits_{k=1}^K  \frac{\partial \Gamma_1(\alpha^{[t]}(d), \{c_k^{[t]}\})}{\partial c_k(d)} (c_k(d) - c_k^{[t]}(d)), \forall d \in \mathcal{D},
\end{aligned}
\end{equation}
\text{where} 
\begin{equation}
\begin{aligned}
    & \frac{\partial \Gamma_1(\alpha^{[t]}(d), \{c_k^{[t]}\})}{\partial \alpha(d)} = - \Bigg( \frac{ \sum\limits_{k=1}^K c_k^{[t]}(d)}{\alpha^{[t]}(d)} \Bigg)^2, \forall d \in \mathcal{D}, \\
    & \frac{\partial \Gamma_1(\alpha^{[t]}(d), \{c_k^{[t]}\})}{\partial c_k(d)} =  \frac{2\sum\limits_{k=1}^K c_k^{[t]}(d) }{\alpha^{[t]}(d)}, \forall d \in \mathcal{D}.
\end{aligned}
\end{equation}
\begin{equation}
\begin{aligned}
    \quad
   & \Gamma_2(\tilde{\mathbf{m}}_d) \geq  \hat{\Gamma}_2(\tilde{\mathbf{m}}_d^{[t]}) \\
   = \;\; & \Gamma_2(\tilde{\mathbf{m}}_d^{[t]})  + \frac{\partial \Gamma_2(\tilde{\mathbf{m}}_d)}{\partial  \tilde{\mathbf{m}}_d }  (\tilde{\mathbf{m}}_d- \tilde{\mathbf{m}}_d^{[t]}) \\ 
   = \;\; & 
   (2\tilde{\mathbf{H}}_k\tilde{\mathbf{m}}_d^{[t]})^{\sf T} \tilde{\mathbf{m}}_d - (\tilde{\mathbf{m}}_d^{[t]})^{\sf T}\tilde{\mathbf{H}}_k\tilde{\mathbf{m}}_d^{[t]}, \forall k \in \mathcal{K}, \forall d \in \mathcal{D}.
\end{aligned}
\end{equation}
\end{lemma}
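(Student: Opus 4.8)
The plan is to recognize that both inequalities are instances of the \emph{first-order underestimator property} of convex functions: any convex $f$ satisfies $f(\mathbf{x}) \ge f(\mathbf{x}^{[t]}) + \nabla f(\mathbf{x}^{[t]})^{\sf T}(\mathbf{x}-\mathbf{x}^{[t]})$, i.e. a differentiable convex function lies above its tangent hyperplane everywhere. Consequently it suffices to (i) prove that $\Gamma_1$ and $\Gamma_2$ are convex on the relevant domains, and (ii) compute their gradients and substitute into this tangent inequality. Since $\Gamma_1$ and $\Gamma_2$ depend on disjoint blocks of variables, I would handle them separately, and the explicit partial derivatives displayed in the statement are then obtained by routine differentiation.

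For $\Gamma_1(\alpha(d),\{c_k(d)\}) = \big(\sum_{k} c_k(d)\big)^2/\alpha(d)$, I would first note that $\{c_k(d)\}\mapsto \sum_k c_k(d)$ is an affine map, so it is enough to establish joint convexity of the quadratic-over-linear function $g(u,\alpha)=u^2/\alpha$ on $\{\alpha>0\}$ and then invoke preservation of convexity under affine substitution $u=\sum_k c_k(d)$. Joint convexity of $g$ is a standard fact, but I would confirm it directly by forming the Hessian $\nabla^2 g = \begin{bmatrix} 2/\alpha & -2u/\alpha^2 \\ -2u/\alpha^2 & 2u^2/\alpha^3\end{bmatrix}$ and checking it is positive semidefinite for $\alpha>0$ (nonnegative diagonal entries and zero determinant). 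Differentiating gives $\tfrac{\partial\Gamma_1}{\partial\alpha(d)} = -\big(\sum_k c_k(d)/\alpha(d)\big)^2$ and $\tfrac{\partial\Gamma_1}{\partial c_k(d)} = 2\sum_{k}c_k(d)/\alpha(d)$, which reproduce exactly the claimed expansion $\hat{\Gamma}_1$.

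For $\Gamma_2(\tilde{\mathbf{m}}_d)=\tilde{\mathbf{m}}_d^{\sf T}\tilde{\mathbf{H}}_k\tilde{\mathbf{m}}_d$, convexity of the quadratic form reduces to showing $\tilde{\mathbf{H}}_k \succeq 0$. The key observation is that $\tilde{\mathbf{H}}_k$ is the standard real representation of the complex Hermitian, rank-one positive semidefinite matrix $\mathbf{h}_k\mathbf{h}_k^{\sf H}$. Writing $\tilde{\mathbf{m}}_d=[\Re(\mathbf{m}_d)^{\sf T},\Im(\mathbf{m}_d)^{\sf T}]^{\sf T}$ and expanding the block quadratic form, one finds $\tilde{\mathbf{m}}_d^{\sf T}\tilde{\mathbf{H}}_k\tilde{\mathbf{m}}_d = \mathbf{m}_d^{\sf H}\mathbf{h}_k\mathbf{h}_k^{\sf H}\mathbf{m}_d = |\mathbf{h}_k^{\sf H}\mathbf{m}_d|^2 \ge 0$, which simultaneously verifies that the real embedding preserves the original complex quadratic form and that it is nonnegative for every $\tilde{\mathbf{m}}_d$; hence $\tilde{\mathbf{H}}_k \succeq 0$ and $\Gamma_2$ is convex. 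Because $\tilde{\mathbf{H}}_k$ is symmetric, $\nabla_{\tilde{\mathbf{m}}_d}\Gamma_2 = 2\tilde{\mathbf{H}}_k\tilde{\mathbf{m}}_d$; substituting into the tangent inequality and collecting terms yields $\hat{\Gamma}_2(\tilde{\mathbf{m}}_d^{[t]}) = 2(\tilde{\mathbf{H}}_k\tilde{\mathbf{m}}_d^{[t]})^{\sf T}\tilde{\mathbf{m}}_d - (\tilde{\mathbf{m}}_d^{[t]})^{\sf T}\tilde{\mathbf{H}}_k\tilde{\mathbf{m}}_d^{[t]}$ as claimed.

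The only genuinely non-routine parts are the two convexity verifications, namely the positive semidefiniteness of the real embedding $\tilde{\mathbf{H}}_k$ and the \emph{joint} (rather than coordinate-wise) convexity of the quadratic-over-linear $\Gamma_1$. Both are standard once the Hessian/embedding computations are carried out, so once they are in place the tangent-plane inequalities follow immediately, and the remaining work is just the differentiation already recorded in the statement.
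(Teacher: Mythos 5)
Your proposal is correct, and it supplies precisely the standard argument this lemma rests on: the paper itself states the result without an appendix proof, treating it as the routine fact that a differentiable convex function is globally lower bounded by its first-order Taylor expansion. You correctly identify and verify the only two points that actually require checking --- the \emph{joint} convexity of the quadratic-over-linear form $u^2/\alpha$ on $\{\alpha>0\}$ (via the singular PSD Hessian, composed with the affine map $u=\sum_k c_k(d)$) and the positive semidefiniteness of the real embedding $\tilde{\mathbf{H}}_k$ (via $\tilde{\mathbf{m}}_d^{\sf T}\tilde{\mathbf{H}}_k\tilde{\mathbf{m}}_d=|\mathbf{h}_k^{\sf H}\mathbf{m}_d|^2\ge 0$) --- and your gradient computations match the expressions in the statement.
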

With any given local point $\{\mathbf{A}^{[t]}, \mathbf{C}^{[t]},  \tilde{\mathbf{M}}^{[t]}\}$ as well as the lower
bounds, problem  \eqref{obj:55} is approximated as the following problem in \eqref{obj:6}, whose feasible region is a subset of the problem in \eqref{obj:55}.
\begin{subequations} \label{obj:6}
\begin{eqnarray}
% &\mathop{\text{maximize}}_{\substack{\{\alpha(d)\},\{\beta_{k,d}\}, \\ \{c_k(d)\}, \{\tilde{\mathbf{m}}_d\}}} 
\mathop{\text{max}}_{\substack{\mathbf{A},\mathbf{B} \\ \mathbf{C}, \tilde{\mathbf{M}}}}  \!\!\!\!\! \!\!\!\!\! 
&& G =  \sum\limits_{d=1}^D  \alpha(d)  \nonumber\\
\text{ s.t. }  \!\!\!\!\! \!\!\!\!\!
% && \frac{c_k^2(d)}{\hat{P}_k} \leq  \hat{f}_{k,d}(\tilde{\mathbf{m}}_d^{[t]})
&& \frac{c_k^2(d)}{\hat{P}_k} \leq \hat{\Gamma}_2(\tilde{\mathbf{m}}_d^{[t]}), \forall k \in \mathcal{K}, \forall d \in \mathcal{D},\\
% && \frac{c_k^2(d)}{\beta_{k,d}} \leq  \hat{f}_{k,d}(\tilde{\mathbf{m}}_d^{[t]}), \forall k \in \mathcal{K}, \forall d \in \mathcal{D},  \\
&& \frac{c_k^2(d)}{\beta_{k,d}} \leq  \hat{\Gamma}_2(\tilde{\mathbf{m}}_d^{[t]}), \forall k \in \mathcal{K}, \forall d \in \mathcal{D},  \\
&& \sum\limits_{d=1}^D \sum\limits_{k=1}^K \beta_{k,d} \leq E,  \\
% &&  \resizebox{0.92\linewidth}{!}{$\frac{\left( \sum\limits_{k=1}^K c_k(d) \right)^2 \sigma_d^2 + \sum\limits_{k=1}^K c_k^2(d) \varepsilon_k^2 + \frac{1}{2} \tilde{\mathbf{m}}_d^{\sf H} \left(\sigma_z^2 \mathbf{I} + \tilde{\mathbf{Q}} \right) \tilde{\mathbf{m}}_d}{  \frac{2}{L(L-1)} \sum\limits_{\ell=1}^L \sum\limits_{\ell < \ell'} \left({\bm{\mu}}_{\ell}(d)- {\bm{\mu}}_{\ell'}(d) \right)^2}$}   \nonumber  \\
% && \quad\quad\quad\quad\quad\quad\quad  \leq \hat{g}_{d}\left(\{c_k^{[t]}\},\alpha^{[t]}(d)\right), \forall d \in \mathcal{D}.
&&
\Lambda(\{c_k(d)\}, \{\mathbf{m}_d\}, \mathbf{Q}) \leq \hat{\Gamma}_1(\alpha^{[t]}(d), \{c_k^{[t]}\}),   \nonumber \\
&&
\quad\quad\quad\quad\quad\quad\quad\quad\quad\quad\quad\quad\quad
\forall d \in \mathcal{D}. 
\end{eqnarray}
\end{subequations}
As a result, this problem is convex, which can be efficiently solved by using convex optimization tools, e.g., CVX \cite{cvx}.
\subsubsection{Subproblem 2}
Next we fix transmit precoding matrix $\mathbf{C}$ and the receive beamforming matrix $\mathbf{M}$ to optimize the quantization noise matrix, then problem \eqref{obj:2} is reduced to the following problem:
% When fixing the receive beamforming vector $\{\mathbf{m}_d\}$, problem \eqref{obj:2} is reduced to the following problem:
\begin{subequations} \label{obj:7} 
\begin{eqnarray}
\mathscr{P}_{2,2}\!\!: 
\mathop{\text{max}}_{\mathbf{A},  \mathbf{Q}}  \!\!\!\!\! \!\!\!\!\!
&& G =  \sum\limits_{d=1}^D  \alpha(d)  \nonumber \\
\text{ s.t. }  \!\!\!\!\! \!\!\!\!\!
&&  
\log \frac{\left| \hat{P} \sum_{k=1}^K \mathbf{h}_k  (\mathbf{h}_{k})^{\sf H} + \sigma_z^2\mathbf{I} + \mathbf{Q} \right|}{\left| \mathbf{Q}\right|}  \leq C, \\
% &&
% \resizebox{0.92\linewidth}{!}{$\frac{\left( \sum\limits_{k=1}^K c_k(d) \right)^2 \sigma_d^2 + \sum\limits_{k=1}^K c_k^2(d) \varepsilon_k^2 + \frac{1}{2} {\mathbf{m}}_d^{\sf H} \left(\sigma_z^2 \mathbf{I} + {\mathbf{Q}} \right) {\mathbf{m}}_d}{\frac{2}{L(L-1)}\sum\limits_{\ell=1}^L \sum\limits_{\ell < \ell'}\left({\bm{\mu}}_{\ell}(d)- {\bm{\mu}}_{\ell'}(d) \right)^2}$} \nonumber \\
% && \quad\quad\quad\quad\quad\quad\quad\quad\quad \leq  \frac{\left( \sum\limits_{k=1}^K c_k(d) \right)^2}{\alpha(d)}, \forall d \in \mathcal{D}. 
&&
\Lambda(\{c_k(d)\}, \{\mathbf{m}_d\}, \mathbf{Q}) \leq \Gamma_1(\alpha(d), \{c_k(d)\}),   \nonumber \\
&&
\quad\quad\quad\quad\quad\quad\quad\quad\quad\quad\quad\quad\quad
\forall d \in \mathcal{D}. 
\end{eqnarray}
\end{subequations}
It is not hard to verify that all constraints in \eqref{obj:7} are convex with respect to $\mathbf{Q}$ \cite{DBLP:journals/jsac/ZhouY14}. For auxiliary variables $\mathbf{A}$, we apply similar SCA technique to $\Gamma_1(\alpha(d), \{c_k(d)\})$ but Taylor expansion is only done at $\mathbf{A}$. Therefore, this problem also becomes convex.

\subsection{Complexity and Convergence Analysis}
 As the complexity of alternating optimization is difficult to achieve, the complexity of solving the subproblem in each iteration is analyzed. The complexity of the (47) subproblem is bounded by $\mathcal{O}((2K+MN+1)^3D^3)$, where  $(2K+MN+1)D$ is the number of variables. The complexity of the (48) subproblem is given by  $\mathcal{O}((MN+D)^3)$, where  $MN+D$ is the number of variables.

 Based on \cite{marks1978general}, it can be proved that the solutions of problems \eqref{obj:6} and \eqref{obj:7} will eventually converge to a stationary point that satisfies the Karush-Kuhn-Tucker (KKT) conditions. Similar conclusions are also derived in those works based on SCA and alternating optimization \cite{DBLP:journals/twc/SunNW21, DBLP:journals/tvt/LyuXZZ23}. The complete proof process is omitted here due to space limitation. Next, we focus on the convergence of alternating optimization.
We denote $G(\mathbf{A}, \mathbf{B}, \mathbf{C}, \mathbf{M}, \mathbf{Q})$ as the value of the objective function in problem \eqref{obj:2} for a feasible solution $\{\mathbf{A}, \mathbf{B}, \mathbf{C}, \mathbf{M}, \mathbf{Q}\}$. As shown in step $4$ of Algorithm \ref{alg:1}, a feasible solution of problem \eqref{obj:7} (i.e., $\{\mathbf{A}^{[t]}, \mathbf{B}^{[t]}, \mathbf{C}^{[t]}, \mathbf{M}^{[t]}, \mathbf{Q}^{[t]}\}$)  is also  feasible to problem \eqref{obj:6}. 
The reasons are as follows. In problem \eqref{obj:7}, only the auxiliary variable $\mathbf{A}$ and quantization noise matrix $\mathbf{Q}$ are optimized with constraint \eqref{cons:42d} still being satisfied. Besides, for the optimized precoding $\mathbf{C}$ and beamforming matrix $\mathbf{M}$ of problem \eqref{obj:6}, the remaining constraints as well as the constraint in \eqref{cons:42d} also hold, such that a feasible solution of problem \eqref{obj:7} is always feasible for problem \eqref{obj:6}.   We denote $\{\mathbf{A}^{[t]}, \mathbf{B}^{[t]}, \mathbf{C}^{[t]}, \mathbf{M}^{[t]}, \mathbf{Q}^{[t]}\}$ and $\{\mathbf{A}^{[t+1]}, \mathbf{B}^{[t+1]}, \mathbf{C}^{[t+1]}, \mathbf{M}^{[t+1]}, \mathbf{Q}^{[t+1]}\}$ as a feasible solution of problem \eqref{obj:2} at the $t$-th and $(t+1)$-th iterations, respectively.

Then, for step $3$ of Algorithm \ref{alg:1}, problem \eqref{obj:6} is convex under given $\mathbf{Q}^{[t]}$, solving which leads to a non-decreasing value of the objective function, i.e., 
% \begin{equation} \label{ineq:1}
% \begin{aligned}
%     G(\mathbf{A}^{[t]}, 
%     \mathbf{B}^{[t]},
%     \mathbf{C}^{[t]}, \mathbf{M}^{[t]}, \mathbf{Q}^{[t]}) \leq      G(\mathbf{A}^{[t+1/2]}, 
%     \mathbf{B}^{[t+1]},
%     \mathbf{C}^{[t+1]}, \mathbf{M}^{[t+1]}, \mathbf{Q}^{[t]}),
% \end{aligned} 
% \end{equation}
\begin{multline} \label{ineq:1}
        G(\mathbf{A}^{[t]}, 
    \mathbf{B}^{[t]},
    \mathbf{C}^{[t]}, \mathbf{M}^{[t]}, \mathbf{Q}^{[t]}) \leq     \\
    G(\mathbf{A}^{[t+1/2]}, 
    \mathbf{B}^{[t+1]},
    \mathbf{C}^{[t+1]}, \mathbf{M}^{[t+1]}, \mathbf{Q}^{[t]}),
\end{multline}
where $\{\mathbf{A}^{[t+1/2]}, 
 \mathbf{B}^{[t+1]}, \mathbf{C}^{[t+1]}, \mathbf{M}^{[t+1]}\}$ is the solution obtained by solving problem \eqref{obj:6} using convex approximation technique. Similarly, for given $\mathbf{C}^{[t+1]}$ and $\mathbf{M}^{[t+1]}$ as shown in step $4$ of Algorithm \ref{alg:1},  the solution  $\{\mathbf{A}^{[t+1]},  \mathbf{Q}^{[t+1]}\}$ obtained
  %  by one-step iteration 
  on problem \eqref{obj:7} will also not reduce the value of the objective function, thus we have
% \begin{equation} \label{ineq:2}
% \begin{aligned}
%     G(\mathbf{A}^{[t+1/2]}, 
%     \mathbf{B}^{[t+1]},
%     \mathbf{C}^{[t+1]}, \mathbf{M}^{[t+1]}, \mathbf{Q}^{[t]}) \leq      G(\mathbf{A}^{[t+1]}, 
%     \mathbf{B}^{[t+1]},
%     \mathbf{C}^{[t+1]}, \mathbf{M}^{[t+1]}, \mathbf{Q}^{[t+1]}).
% \end{aligned} 
% \end{equation}
\begin{multline} \label{ineq:2}
    G(\mathbf{A}^{[t+1/2]}, 
    \mathbf{B}^{[t+1]},
    \mathbf{C}^{[t+1]}, \mathbf{M}^{[t+1]}, \mathbf{Q}^{[t]}) \leq  \\
    G(\mathbf{A}^{[t+1]}, 
    \mathbf{B}^{[t+1]},
    \mathbf{C}^{[t+1]}, \mathbf{M}^{[t+1]}, \mathbf{Q}^{[t+1]}).
\end{multline}
Based on \eqref{ineq:1} and \eqref{ineq:2}, we further obtain
% \begin{equation}
% \begin{aligned}
%     G(\mathbf{A}^{[t+1]}, 
%     \mathbf{B}^{[t+1]},
%     \mathbf{C}^{[t+1]}, \mathbf{M}^{[t+1]}, \mathbf{Q}^{[t+1]}) 
%     \geq
%     G(\mathbf{A}^{[t]}, 
%     \mathbf{B}^{[t]},
%     \mathbf{C}^{[t]}, \mathbf{M}^{[t]}, \mathbf{Q}^{[t]}),
% \end{aligned}
% \end{equation}
\begin{multline}
    G(\mathbf{A}^{[t+1]}, 
    \mathbf{B}^{[t+1]},
    \mathbf{C}^{[t+1]}, \mathbf{M}^{[t+1]}, \mathbf{Q}^{[t+1]}) 
    \geq \\
    G(\mathbf{A}^{[t]}, 
    \mathbf{B}^{[t]},
    \mathbf{C}^{[t]}, \mathbf{M}^{[t]}, \mathbf{Q}^{[t]}),
\end{multline}
which shows that the objective value of problem \eqref{obj:2} is always increasing over iterations.
Therefore, the proposed algorithm converges. This thus completes the proof.

\section{Numerical Results}
In this section, we evaluate the performance of the proposed AirComp-based edge inference system over Cloud-RAN. 

\subsection{Experiment Settings}
\subsubsection{Network Settings}
We consider a Cloud-RAN network with $K = 20$ single-antenna devices and $M = 4$ RRHs.  The number of antennas will be stated later. The devices and RRHs are both randomly and independently located in a circular area with an inner radius of $100m$ and an outer radius of $500m$. The channel is modeled as the small-scale fading coefficients multiplied by the square root of the path loss, i.e., $\mathbf{h}_{k,m} = 10^{-pl(d)/20}\mathbf{s}_{k,m}$,  where $pl(d)$ is the path loss in $dB$ given as $30.6+36.7\log_{10} (d)$ and $d$ (in meter) is the distance between the device $k$ and RRH $m$. The small-scale fading coefficients $\{\mathbf{s}_{k,m}\}$ are assumed to follow the standard complex Gaussian distribution, i.e., $\mathbf{s}_{k,m} \sim \mathcal{CN}(0,\mathbf{I}),\; \forall (k,m)$. The power spectral density of the background noise at each RRH is set as $-169$ dBm/Hz and the noise figure is $7$ dB. All numerical results are averaged over $50$ trails.

\subsubsection{Inference Task}
 We perform two inference tasks, one on the human motion dataset \cite{li2021wireless} and the other on the Fashion MNIST dataset \cite{DBLP:journals/corr/abs-1708-07747}. The human motion dataset contains $6400$ training samples and $1600$ testing samples of $4$ different human motions, i.e.,  \textit{child walking, child pacing, adult pacing} and \textit{adult walking}.  The heights of children and adults are assumed to be uniformly distributed in the interval [$0.9$m, $1.2$m] and [$1.6$m, $1.9$m], respectively. The speed of standing, walking, and pacing are $0$ m/s, $0.5H$ m/s, and $0.25H$ m/s, respectively, where $H$ is the height value. The heading of the moving human is set to be uniformly distributed in $[-180^{\circ}$, $180^{\circ}]$.
In the human motion dataset, each edge device transmits the \emph{frequency-modulated continuous-wave} (FMCW) consisting of multiple up-ramp chirps for sensing. The reflected echo signals are sampled and arranged into a two-dimensional data matrix that contains the motion information of the interesting target polluted by the ground clutter and noise. The obtained data matrix is applied to a singular value decomposition (SVD) based linear filter for clutter elimination and is flattened into a $1520$-dimensional vector.  The Fashion MNIST dataset comprises $60,000$ training images and $10,000$ testing images with $10$ different fashion products such as T-shirts and  Trousers. 

\subsubsection{Inference Model}
 Two commonly used AI models, i.e., SVM and MLP neural networks are considered for the inference task. In the training process, the human motion dataset and Fashion MNIST dataset retain $12$ and $50$ principal components respectively, which also determines the input dimension of SVM and MLP.  It is sufficient since the proportion of variance contributed by the retained principal components accounts for more than $70$\% \cite{rea2016many}.  % It is reasonable since the proportion of human motion data samples with $1520$ dimensions retains $12$ principal components. There is usually no well-defined standard for determining the number of principal components. However, as a ``rule of thumb", a suitable cutoff point is to retain principal components that capture $70$\% or $90$\% of the variation \cite{rea2016many}. 
The one-vs-one strategy is employed in SVM, where a unique classifier is trained for each different pair of labels. This results in $6$ and $45$ binary classifiers for the human motion and Fashion MNIST datasets, respectively.  Each classifier uses hinge loss as the loss function and the sequential minimal optimization (SMO) algorithm as the optimization solver.  As for MLP, the neural network model consists of two hidden layers and is the same for both two datasets.  These two layers have $80$ and $40$ neurons and the ReLU is used as activation function. The network model adopts the LBFGS algorithm to minimize the cross-entropy loss.  The entire training process terminates after the  $16$-th iteration for the human motion dataset and the $1000$-th iteration for the Fashion MNIST dataset. These models are trained without any distortion, i.e., sensing clutter, quantization and noise distortion. The testing dataset is distorted by the clutter, quantization and noise introduced by the sensing and communication process. Although the training data used here are noise-free, it has been shown that the result of PCA on noisy data is similar to the result of PCA on noise-free data when the data and noise are independent \cite{DBLP:journals/tip/KhalilianB13}.

% \begin{enumerate}
%     \item \textit{Inference Dataset}: We perform the inference task on the human motion datasets proposed in \cite{li2021wireless} where sensing simulator is adopted to produce various high-fidelity human motions. In our simulation, we only consider to identify four distinct human motions, i.e., \textit{child walking, child pacing, adult pacing} and \textit{adult walking}. The entire dataset contains $8000$ samples, $2000$ for each class. Following the setting in \cite{https://doi.org/10.48550/arxiv.2206.05949}, the heights of children and adults are assumed to be uniformly distributed in intervals [$0.9$m, $1.2$m] and [$1.6$m, $1.9$m], respectively. The speed of standing, walking, and pacing are $0$ m/s, $0.5H$ m/s, and $0.25H$ m/s, respectively, where $H$ is the height value. In addition, the heading of the moving human is set to be uniformly distributed in $[-180^{\circ}$, $180^{\circ}]$.
%     \item \textit{Inference Model}: SVM and MLP neural networks are two commonly used classification models in machine learning and they are considered for our inference task. Specifically, the neural network model consists of two hidden layers, each with $80$ and $40$ neurons. The human motion datasets are divided into non-overlapping training and test sets with $6400$ and $1600$ samples, respectively. Both models are trained using the same training data regardless of channel and data distortion.
% \end{enumerate}

\subsection{Convergence of the Proposed Algorithm}
\begin{figure}
    \centering
    \includegraphics[width=0.35\textwidth]{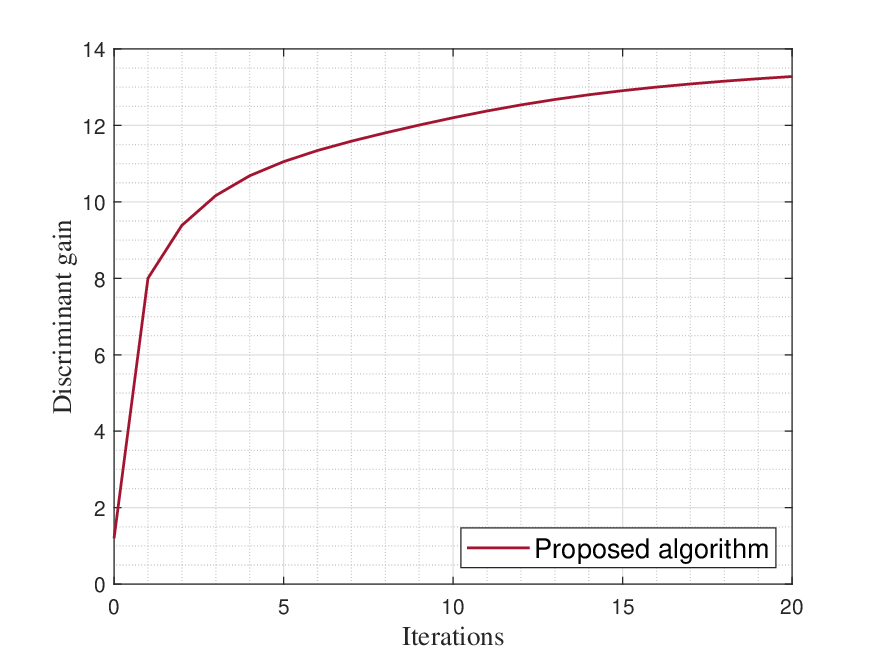}
    \captionsetup{font=small}
    \caption{Convergence behavior of Proposed Algorithm.}
    \label{fig:conv_behavior}
\end{figure}
\begin{figure}
    \centering
    \includegraphics[width=0.35\textwidth]{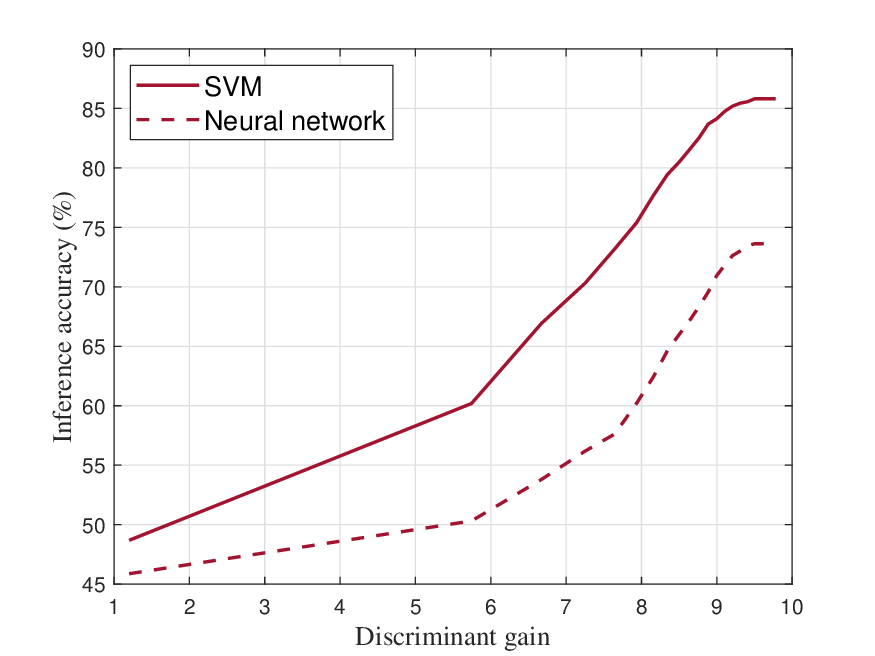}
    \captionsetup{font=small}
    \caption{Inference accuracy versus discriminant gain.}
    \label{fig:dg v.s. acc}
\end{figure}

In this part, we show the convergence behavior of the proposed algorithm and outline the relationships between the discriminant gain and inference accuracy. In Fig. \ref{fig:conv_behavior}, we plot the discriminant gain achieved by the proposed algorithm with power constraint $P = 23$ dBm. It is observed that the  discriminant gain increases quickly and converges within a few iterations. This demonstrates the efficiency of the proposed algorithm for joint optimization. Besides, the relation between discriminant gain and the instantaneous inference accuracy is illustrated in Fig. \ref{fig:dg v.s. acc}. It can be found from the figure that the inference accuracy is monotonically increasing with an increasing value of discriminant gain for both models, which verifies the effectiveness of the latter. 

\begin{figure*}
     \centering
     \begin{subfigure}{0.495\textwidth}
        \centering
         \includegraphics[scale=0.5,trim={5pt 0pt 20pt 20pt},clip]{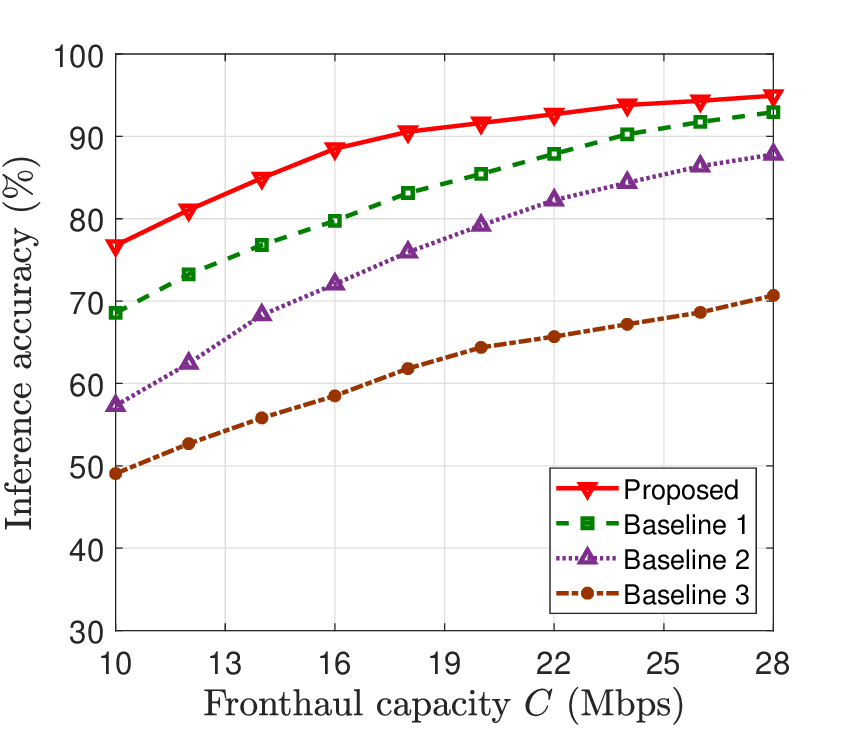}
         \captionsetup{font=small}
         \caption{Inference accuracy of SVM versus fronthaul capacity.}
         \label{fig:sensing_capacity_a}
        \end{subfigure}
        \hspace{-4.3em}
         \begin{subfigure}{0.495\textwidth}
        \centering
         \includegraphics[scale=0.5,trim={0pt 0pt 20pt 20pt},clip]{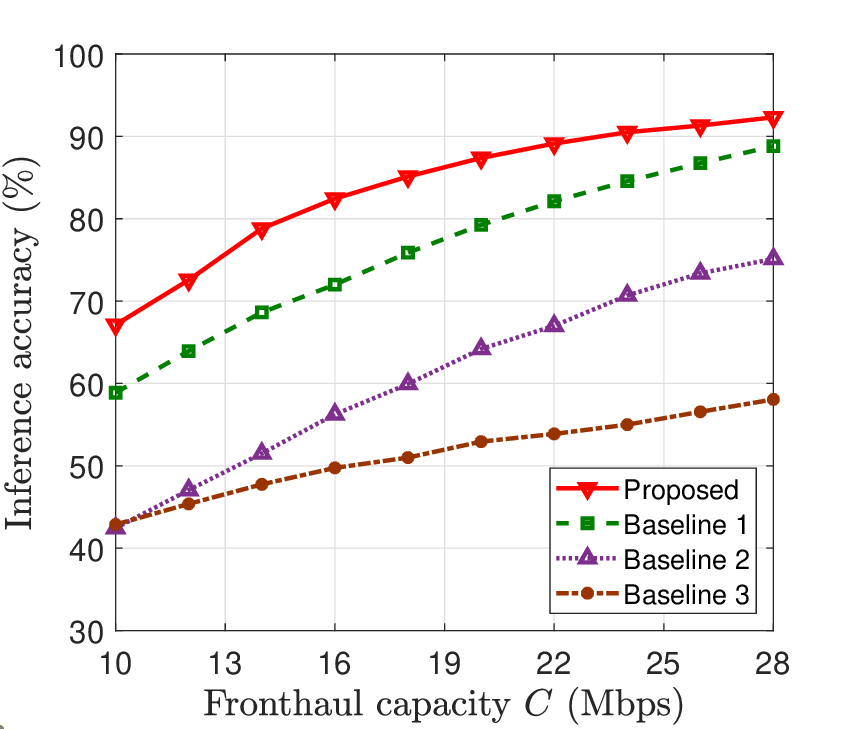}
         \captionsetup{font=small}
         \caption{Inference accuracy of MLP versus fronthaul capacity.}
         \label{fig:sensing_capacity_b}
        \end{subfigure}
        \captionsetup{font=small}
        \caption{Inference accuracy comparison among different models under different fronthaul capacities for the human motion dataset with $N=4$.}
        \label{fig:sensing_capacity}
\end{figure*}

\begin{figure*}
     \centering
     \begin{subfigure}{0.495\textwidth}
        \centering
         \includegraphics[scale=0.5,trim={0pt 0pt 20pt 20pt},clip]{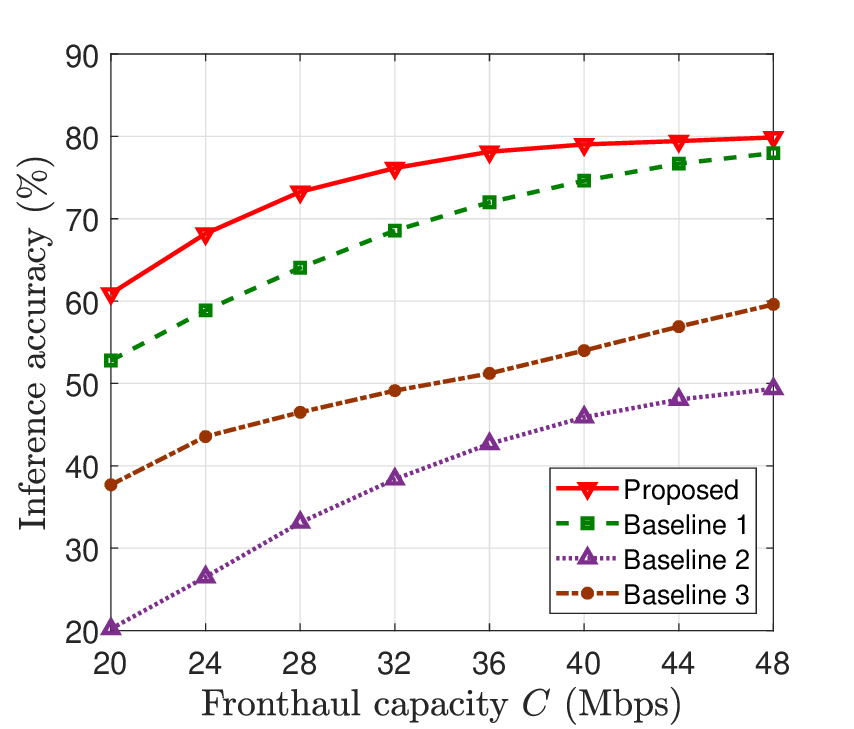}
         \captionsetup{font=small}
         \caption{Inference accuracy of SVM versus fronthaul capacity.}
         \label{fig:fmnist_capacity_a}
        \end{subfigure}
        \hspace{-4.3em}
         \begin{subfigure}{0.495\textwidth}
        \centering
         \includegraphics[scale=0.5,trim={5pt 0pt 20pt 20pt},clip]{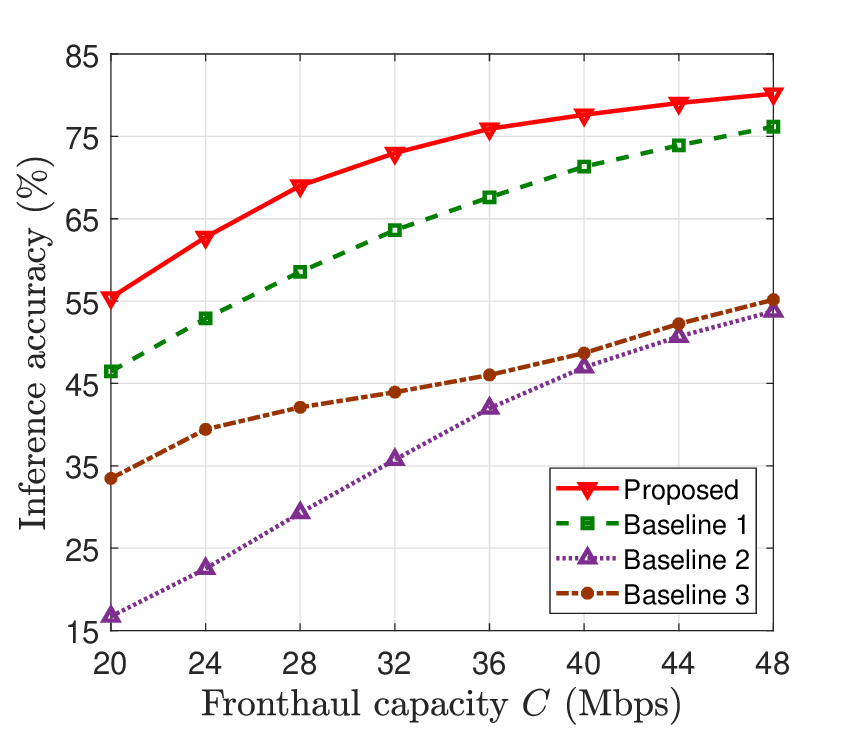}
         \captionsetup{font=small}
         \caption{Inference accuracy of MLP versus fronthaul capacity.}
         \label{fig:fmnist_capacity_b}
        \end{subfigure}
        \captionsetup{font=small}
        \caption{Inference accuracy comparison among different models under different fronthaul capacities for the Fashion MNIST dataset with $N=2$.}
        \label{fig:fmnist_capacity}
\end{figure*}

\begin{figure*}
\centering
\begin{subfigure}{0.495\textwidth}
    \centering
    \includegraphics[scale=0.5,trim={0pt 0pt 20pt 20pt},clip]{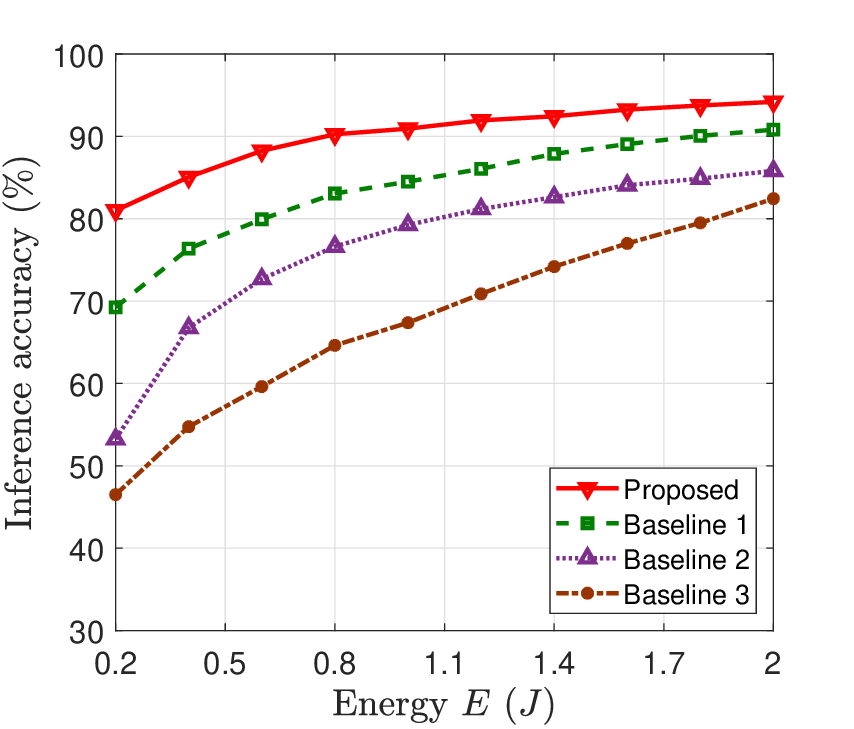}\vspace{0cm}
    \captionsetup{font=small}
    \caption{Inference accuracy of SVM versus energy.}\label{fig:sensing_energy_a}
\end{subfigure}
\hspace{-4.3em}
\begin{subfigure}{0.495\textwidth}
    \centering
    \includegraphics[scale=0.5,trim={0pt 0pt 20pt 20pt},clip]{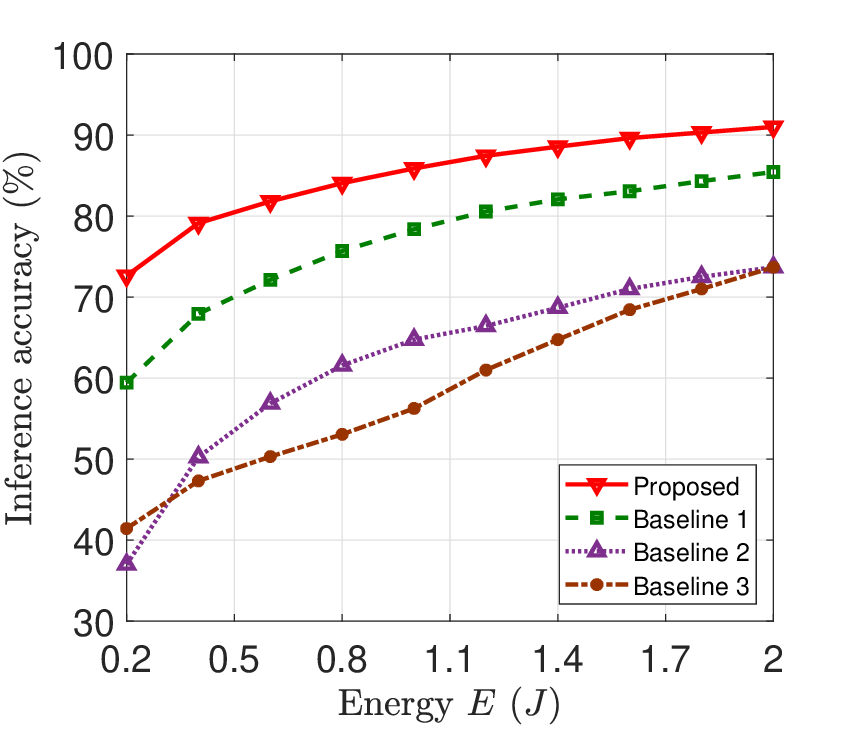}\vspace{0cm}
    \captionsetup{font=small}
    \caption{Inference accuracy of MLP versus energy.}
    \label{fig:sensing_energy_b}
    \end{subfigure}
    \captionsetup{font=small}
    \caption{Inference accuracy comparison among different models under different energy constraints for the human motion dataset with $N=4$.} \label{fig:sensing_energy}
\end{figure*}

\begin{figure*}
\centering
\begin{subfigure}{0.495\textwidth}
    \centering
    \includegraphics[scale=0.5,trim={5pt 0pt 20pt 20pt},clip]{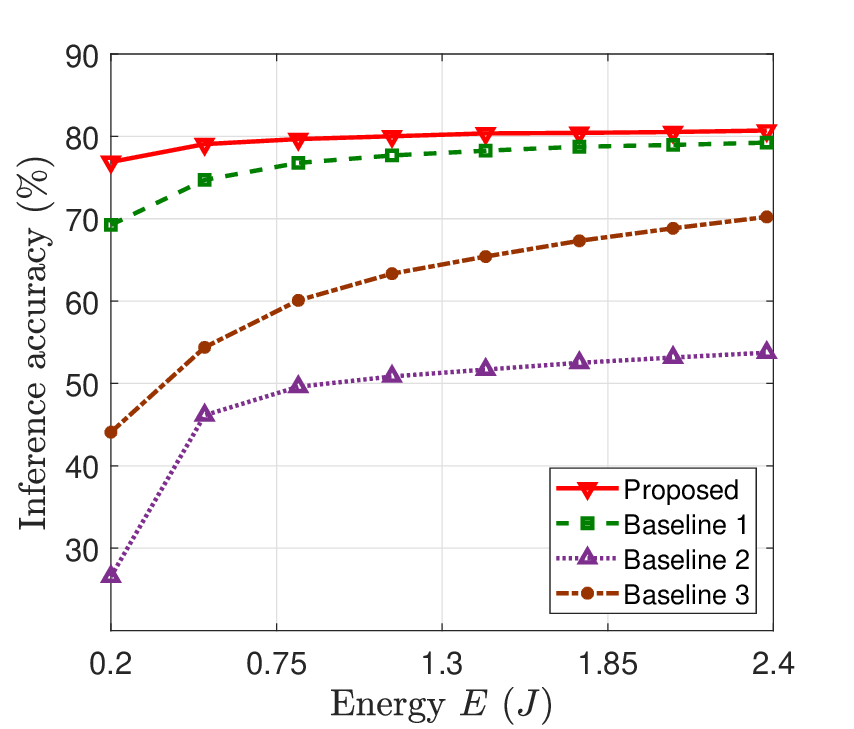}\vspace{0cm}
    \captionsetup{font=small}
    \caption{Inference accuracy of SVM versus energy.}\label{fig:fmnist_energy_a}
\end{subfigure}
\hspace{-4.3em}
\begin{subfigure}{0.495\textwidth}
    \centering
    \includegraphics[scale=0.5,trim={5pt 0pt 20pt 20pt},clip]{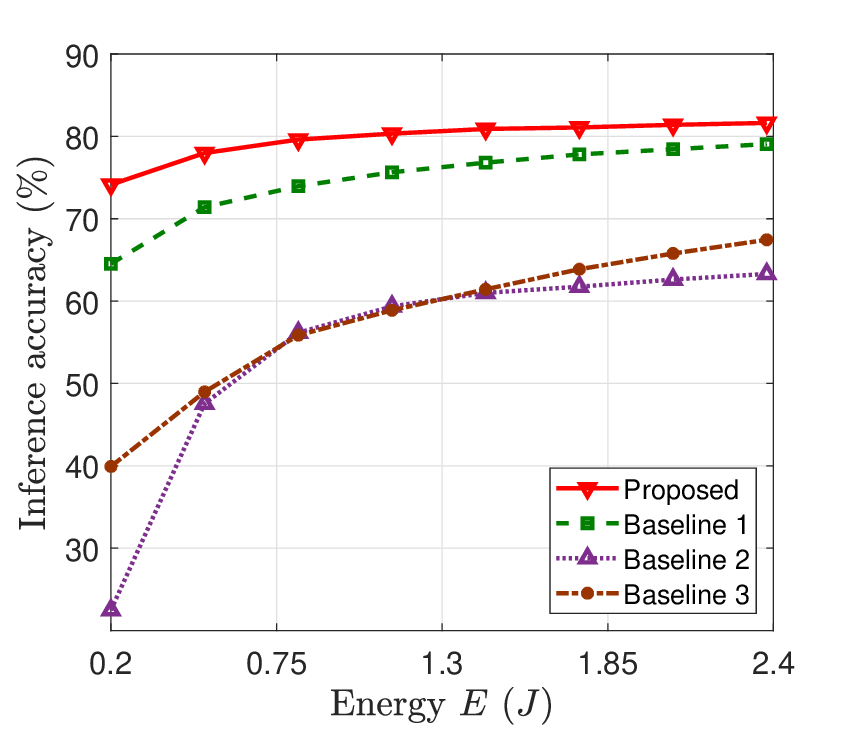}\vspace{0cm}
    \captionsetup{font=small}
    \caption{Inference accuracy of MLP versus energy.}
    \label{fig:fmnist_energy_b}
    \end{subfigure}
    \captionsetup{font=small}
    \caption{Inference accuracy comparison among different models under different energy constraints for the Fashion MNIST dataset with $N=2$.} \label{fig:fmnist_energy}
\end{figure*}
\subsection{Impact of Key System Parameters}
    In this part, we show the performance gain of joint optimization over other baseline methods under the wireless and fronthaul resource constraints and investigate the impact of various key system parameters. For ease of presentation, we refer to our proposed algorithm for jointly optimizing transmit precoding, quantization noise matrix and receive beamforming as \textbf{Proposed} and set the following schemes as baselines for comparison:
    \begin{itemize}
        \item \textbf{Baseline 1: Uniform quantization with joint optimization of transmit precoding and receive beamforming.}  In Baseline 1, the optimized portion of transmit precoding and receive beamforming follows Algorithm \ref{alg:1}. The CP performs uniform quantization at all antennas across all RRHs, i.e., setting $\mathbf{Q} = \lambda \mathbf{I}$, where scalar $\lambda$ can be easily be selected by binary search to exactly satisfy the capacity constraint \eqref{cons:1a}.  The transmit precoding and receive beamforming are jointly optimized.
        \item \textbf{Baseline 2: Uniform receive beamforming with joint optimization of transmit precoding and quantization matrix.} In Baseline 2, the optimized portion of the transmit precoding and quantization matrix follows Algorithm \ref{alg:1}. The receive beamforming is uniformly designed, i.e., setting $\mathbf{m}_d = \mathbf{1}$.
         \item {\color{blue}\textbf{Baseline 3:  Fixed transmit precoding with joint optimization of quantization matrix and receive beamforming.} In Baseline 3, we fixed the transmitting precoding $\{b_k(d)\}$ as the same value for all devices in all time slots. The set value does not violate energy and power constraints. Then the quantization and receive beamforming are jointly designed. }
    \end{itemize}
    
In the sequel, the proposed joint optimization scheme is compared with the above three baseline schemes.

\subsubsection{Inference Accuracy v.s. Fronthaul Capacity}
The inference accuracy of both models achieved by different schemes under various fronthaul capacity $C$ is shown in Fig. \ref{fig:sensing_capacity} and Fig. \ref{fig:fmnist_capacity}. It is observed that when fronthaul capacity increases in all cases, inference accuracy in all schemes improves.  Our proposed joint optimization achieves the best performance over Baselines $1$, $2$ and $3$. {\color{blue}  Particularly, the proposed scheme outperforms Baseline 3. Generally, the fixed transmit precoding design in Baseline 3 cannot capture the diverse importance levels of different feature elements on inference accuracy.}  Furthermore, it is also noticed that Baseline $1$ with uniform quantization consistently outperforms Baseline $2$. This indicates that optimization of beamforming on the CP can achieve more performance gain than that of quantization.
        
\subsubsection{Inference Accuracy v.s. Energy}
Fig. \ref{fig:sensing_energy} and Fig. \ref{fig:fmnist_energy} show the inference accuracy of both models achieved by different schemes under different energy thresholds. From the figure, the inference accuracy increases as the energy requirement is gradually relaxed. This is due to the fact that more energy suppresses the channel noise and thus the discriminant gain is enhanced. In addition, similar to the case of the fronthaul capacity, we can also conclude that Baseline $1$ outperforms Baseline $2$.

The extensive experimental results presented above show the priority of the proposed joint optimization scheme and verify our theoretical analysis.

\section{Conclusion}
In this paper, we implemented task-oriented communication for multi-device cooperative edge inference over a Cloud-RAN based wireless network, where the edge devices upload extracted features to the CP using AirComp.  The design of AirComp does not follow the previous criterion of MMSE, but directly adopts the inference accuracy as the design goal. Particularly, since the instantaneous inference accuracy is intractable, an approximate metric called discriminant gain is adopted as the alternative. This task-oriented communication systems are ultimately modeled as an optimization problem that maximizes discriminant gain. To address this problem, we develop an efficient iterative algorithm to solve this non-convex problem by applying variable transformation, SCA and alternating optimization techniques. Extensive numerical results show that our proposed optimization algorithm can achieve higher inference performance and the effectiveness of the proposed Cloud-RAN network architecture for cooperative inference was also verified.

This work opens several research directions. One is the device scheduling at each CP for selecting only a subset of devices. The other is to overcome the shortages such as pilot overheads and channel estimation errors caused by the channel estimations of the large number of wireless links.

\section{Appendix}
\subsection{Proof of Lemma \ref{lma:1}} 
As mentioned in \eqref{pdf}, the ground-true feature vector can be written as the average of $L$ independent Gaussian random variables,
\begin{equation}
\begin{aligned}
    \mathbf{\tilde{x}} = \frac{1}{L} \sum_{\ell = 1}^L  \tilde{\mathbf{x}}_{\ell},
\end{aligned}
\end{equation}
where $\tilde{\mathbf{x}}_{\ell} \sim \mathcal{N}(\bm{\mu}_\ell, \bm{\Sigma})$.

Then taking back into \eqref{feature_vector},  the ground-true feature vector $\mathbf{\Tilde{x}}$ becomes
\begin{equation}
\begin{aligned}
    \mathbf{\Tilde{x}}_k =  \frac{1}{L} \sum_{\ell = 1}^L \tilde{\mathbf{x}}_{\ell} +  \mathbf{\Tilde{e}}_k = \frac{1}{L} \sum_{\ell = 1}^L \tilde{\mathbf{x}}_{\ell, k},
\end{aligned}
\end{equation}
where $\tilde{\mathbf{x}}_{\ell, k} = \tilde{\mathbf{x}}_{\ell} + \mathbf{\Tilde{e}}_k$. Thus we can obtain the distribution of  $\tilde{\mathbf{x}}_{\ell, k}$ as 
\begin{equation}
\begin{aligned}
    \tilde{\mathbf{x}}_{\ell, k} \sim \mathcal{CN}(\bm{\mu}_\ell, \bm{\Sigma}+ \varepsilon_k^2 \mathbf{I} ), 1 \leq \ell \leq L.
\end{aligned} 
\end{equation}
Finally, the distribution of local feature vector $\tilde{\mathbf{x}}_{\ell, k}$ of device $k$ is given by
\begin{equation}
\begin{aligned}
     f( \mathbf{\Tilde{x}}_k ) = \dfrac{1}{L}\sum\limits_{\ell=1}^L \mathcal{N}(\bm{\mu}_{\ell},\bm{\Sigma}+\varepsilon_k^2 \mathbf{I} ), \ \forall k\in \mathcal{K}. 
\end{aligned}
\end{equation}

\subsection{Proof of Lemma \ref{lma:2}}
Following the same approach as lemma \ref{lma:1} but taking the element-wise version, the estimated element can be written as

\begin{equation}
\begin{aligned}
    \hat{s}(d) &= \frac{1}{L}   \sum\limits_{k=1}^K \sum_{\ell = 1}^L c_k(d) \mathbf{\tilde{x}}_{\ell}(d) + \sum\limits_{k=1}^K c_k(d) \mathbf{\Tilde{e}}_k(d) + n(d) \\
    &= \frac{1}{L} \sum_{\ell = 1}^L \mathbf{\tilde{x}}_{\ell, s}(d),
\end{aligned}
\end{equation}
where $\mathbf{\tilde{x}}_{\ell, s}(d) = \sum\limits_{k=1}^K c_k(d) \mathbf{\tilde{x}}_{\ell}(d) +  \sum\limits_{k=1}^K c_k(d) \mathbf{\Tilde{e}}_k(d) +  n(d)$. 

Thus, we can obtain the distribution of  $\mathbf{\tilde{x}}_{\ell, s}(d)$ as
\begin{equation}
\begin{aligned}
\mathbf{\tilde{x}}_{\ell, s}(d) &\sim \mathcal{N}\left( \sum\limits_{k=1}^K c_k(d) \bm{\mu}_{\ell}(d), \right.  \\
&\left. \Big(\sum\limits_{k=1}^K c_k(d) \Big)^2 \sigma^2_d  + \sum\limits_{k=1}^K c_k^2(d) \varepsilon_k^2 +  \sigma^2 \right),  1 \leq \ell \leq L.
\end{aligned}
\end{equation}

Finally, the distribution of  the aggregation signal $\hat{s}(d)$ is given by
\begin{equation}
\begin{aligned}
    \hat{s}(d) \sim \frac{1}{L}\sum\limits_{\ell=1}^L \mathcal{N}\left( \hat{\bm{\mu}}_{\ell}(d), \hat{\sigma}^2_d \right), \ \forall d \in \mathcal{D}.
\end{aligned}
\end{equation}

\subsection{Proof of Lemma \ref{lma:3}}
    Suppose that the new problem $\mathscr{P}'_1$ has an optimal solution: $\{ \mathbf{A}^*, \mathbf{C}^*, \mathbf{M}^*,\mathbf{Q}^* \}$, these exists a  $d' \in [1, D]$ such that the inequality \eqref{cons:333b} strictly holds, i.e., 
\begin{equation} \label{cons:3333b}
    \Lambda(\{c_k^*(d')\}, \{\mathbf{m}_{d'}^*\}, \mathbf{Q}^*) < \Gamma_1(\alpha^*(d'), \{c_k^*(d')\}), 
\end{equation}
% \begin{equation} \label{cons:3333b}
% \small
%    \alpha^*(d') < \frac{\frac{2}{L(L-1)} \Big( \sum\limits_{k=1}^K c_k^*(d') \Big)^2 \sum\limits_{\ell=1}^L \sum\limits_{\ell < \ell'}\left({\bm{\mu}}_{\ell}(d')- {\bm{\mu}}_{\ell'}(d') \right)^2}{\Big( \sum\limits_{k=1}^K c_k^*(d') \Big)^2 \sigma^2_{d'} + \sum\limits_{k=1}^K c_k^{*2}(d') \varepsilon_k^2 + \frac{1}{2} (\mathbf{m}^*_{d'})^{\sf H} \left(\sigma_z^2 \mathbf{I} + \mathbf{Q^*} \right) \mathbf{m}^*_{d'}}.
% \end{equation}  

% Then, from the continuity of linear function on the left-hand side of, under a fixed $\{\mathbf{C}^*, \mathbf{M}^*,\mathbf{Q}^*\}$, there always exists a number $\eta > 0$ such that
% \begin{equation}
%     \alpha^*_+(d') = (1 + \eta)  \; \alpha^*(d') > \alpha^*(d'),
% \end{equation}
% which leads to 

% \begin{equation}
% \small
% \begin{aligned}
% \alpha^*(d') &< \alpha^*_+(d') < \\
% & \frac{\frac{2}{L(L-1)} \Big( \sum\limits_{k=1}^K c_k^*(d') \Big)^2 \sum\limits_{\ell=1}^L \sum\limits_{\ell < \ell'}\left({\bm{\mu}}_{\ell}(d')- {\bm{\mu}}_{\ell'}(d') \right)^2}{\Big( \sum\limits_{k=1}^K c_k^*(d') \Big)^2 \sigma^2_{d'} + \sum\limits_{k=1}^K c_k^{*2}(d') \varepsilon_k^2 + \frac{1}{2} (\mathbf{m}^*_{d'})^{\sf H} \left(\sigma_z^2 \mathbf{I} + \mathbf{Q^*} \right) \mathbf{m}^*_{d'}}.
% \end{aligned}
% \end{equation}

Based on the continuity of inversely proportional function about $\alpha(d)$ on the right-hand side of \eqref{cons:3333b},  under a fixed $\{\mathbf{C}^*, \mathbf{M}^*,\mathbf{Q}^*\}$, there always exists a number $\eta > 0$ such that 
\begin{equation}
    \alpha^*_{+}(d') = (1 + \eta)  \; \alpha^*(d') > \alpha^*(d'),
\end{equation}
which leads to 
\begin{equation} 
\begin{aligned}
    \Lambda(\{c_k^*(d')\}, \{\mathbf{m}_{d'}^*\}, \mathbf{Q}^*) &< \Gamma_1(\alpha^*_{+}(d'), \{c_k^*(d')\}) \\ 
    &< \Gamma_1(\alpha^*(d'), \{c_k^*(d')\}).
\end{aligned}
\end{equation}

By substituting $\alpha^*_+(d')$ into $\mathscr{P}_1'$, the value of the objective function can be increased further. This is a contradiction of the fact that  $\alpha^*(d')$ is the optimal solution of problem $\mathscr{P}_1'$. Thus, the problem extended the constraint \eqref{cons:33b} achieves the same optimal solution as $\mathscr{P}_1$.

\subsection{Proof of Lemma \ref{lma:4}}
    Given a set of variables $\{\mathbf{C}, \mathbf{M}\}$ satisfying constraint \eqref{cons:1b}, it is always possible to let $\beta_{k,d} = \frac{c_k^2(d)}{\left| \mathbf{m}_d^{\sf H} \mathbf{h}_k \right|^2}, \forall k \in \mathcal{K}, \forall d \in \mathcal{D} $, then constraints \eqref{cons:2a} \eqref{cons:2b} holds.
    Given a set of variables $\{\mathbf{B}, \mathbf{C}, \mathbf{M}\}$ satisfying constraints \eqref{cons:2a} \eqref{cons:2b}, then immediately constraint \eqref{cons:1223}  holds by simple algebra operations. Simultaneously summing $K$ and $D$ on both sides of the inequality in constraint \eqref{cons:1223} and combining with \eqref{cons:2b}, the inequality \eqref{cons:1b} is derived.

\bibliography{reference}

\begin{thebibliography}{10}

\bibitem{DBLP:journals/jsac/LetaiefSLL22}
K.~B. Letaief, Y.~Shi, J.~Lu, and J.~Lu, ``Edge artificial intelligence for
  6{G}: Vision, enabling technologies, and applications,'' {\em {IEEE} J. Sel.
  Areas Commun.}, vol.~40, no.~1, pp.~5--36, 2022.

\bibitem{DBLP:journals/cm/LetaiefCSZZ19}
K.~B. Letaief, W.~Chen, Y.~Shi, J.~Zhang, and Y.~A. Zhang, ``The roadmap to
  6{G}: {AI} empowered wireless networks,'' {\em {IEEE} Commun. Mag.}, vol.~57,
  no.~8, pp.~84--90, 2019.

\bibitem{DBLP:journals/cm/ZhuLDYZH20}
G.~Zhu, D.~Liu, Y.~Du, C.~You, J.~Zhang, and K.~Huang, ``Toward an intelligent
  edge: Wireless communication meets machine learning,'' {\em {IEEE} Commun.
  Mag.}, vol.~58, no.~1, pp.~19--25, 2020.

\bibitem{DBLP:journals/corr/abs-1911-03878}
D.~Wen, X.~Li, Q.~Zeng, J.~Ren, and K.~Huang, ``An overview of data-importance
  aware radio resource management for edge machine learning,'' {\em J. Commun.
  Inf. Netw.}, vol.~4, no.~4, pp.~1--14, 2019.

\bibitem{li2022deep}
D.~Li, Y.~Gu, H.~Ma, Y.~Li, L.~Zhang, R.~Li, R.~Hao, and E.-P. Li, ``Deep
  learning inverse analysis of higher order modes in monocone tem cell,'' {\em
  IEEE Transactions on Microwave Theory and Techniques}, vol.~70, no.~12,
  pp.~5332--5339, 2022.

\bibitem{DBLP:journals/jcin/LanWZZCPH21}
Q.~Lan, D.~Wen, Z.~Zhang, Q.~Zeng, X.~Chen, P.~Popovski, and K.~Huang, ``What
  is semantic communication? {A} view on conveying meaning in the era of
  machine intelligence,'' {\em J. Commun. Inf. Networks}, vol.~6, no.~4,
  pp.~336--371, 2021.

\bibitem{DBLP:journals/comsur/ShiYJZL20}
Y.~Shi, K.~Yang, T.~Jiang, J.~Zhang, and K.~B. Letaief,
  ``Communication-efficient edge {AI:} algorithms and systems,'' {\em {IEEE}
  Commun. Surv. Tutorials}, vol.~22, no.~4, pp.~2167--2191, 2020.

\bibitem{DBLP:journals/corr/abs-2306-01162}
D.~Wen, X.~Li, Y.~Zhou, Y.~Shi, S.~Wu, and C.~Jiang, ``Integrated
  sensing-communication-computation for edge artificial intelligence,'' {\em
  CoRR}, vol.~abs/2306.01162, 2023.

\bibitem{DBLP:journals/tmc/LeeYD23}
M.~Lee, G.~Yu, and H.~Dai, ``Decentralized inference with graph neural networks
  in wireless communication systems,'' {\em {IEEE} Trans. Mob. Comput.},
  vol.~22, no.~5, pp.~2582--2598, 2023.

\bibitem{DBLP:conf/isit/YilmazHG22}
S.~F. Yilmaz, B.~Hasircioglu, and D.~G{\"{u}}nd{\"{u}}z, ``Over-the-air
  ensemble inference with model privacy,'' in {\em {IEEE} International
  Symposium on Information Theory, {ISIT} 2022, Espoo, Finland, June 26 - July
  1, 2022}, pp.~1265--1270, {IEEE}, 2022.

\bibitem{zhu2023pushing}
G.~Zhu, Z.~Lyu, X.~Jiao, P.~Liu, M.~Chen, J.~Xu, S.~Cui, and P.~Zhang,
  ``{Pushing AI to wireless network edge: an overview on integrated sensing,
  communication, and computation towards 6G},'' {\em Sci. China Inf. Sci.},
  vol.~66, no.~3, p.~130301, 2023.

\bibitem{DBLP:journals/cm/ShaoZ20}
J.~Shao and J.~Zhang, ``Communication-computation trade-off in
  resource-constrained edge inference,'' {\em {IEEE} Commun. Mag.}, vol.~58,
  no.~12, pp.~20--26, 2020.

\bibitem{yang2020energy}
K.~Yang, Y.~Shi, W.~Yu, and Z.~Ding, ``Energy-efficient processing and robust
  wireless cooperative transmission for edge inference,'' {\em IEEE Internet
  Things J.}, vol.~7, no.~10, pp.~9456--9470, 2020.

\bibitem{huang2020dynamic}
X.~Huang and S.~Zhou, ``Dynamic compression ratio selection for edge inference
  systems with hard deadlines,'' {\em IEEE Internet Things J.}, vol.~7, no.~9,
  pp.~8800--8810, 2020.

\bibitem{yun2021cooperative}
S.~Yun, J.-M. Kang, S.~Choi, and I.-M. Kim, ``Cooperative {I}nference of {DNNs}
  {O}ver {N}oisy {W}ireless {C}hannels,'' {\em IEEE Trans. Veh. Technol.},
  vol.~70, no.~8, pp.~8298--8303, 2021.

\bibitem{he2020attacking}
Z.~He, T.~Zhang, and R.~B. Lee, ``Attacking and protecting data privacy in
  edge--cloud collaborative inference systems,'' {\em IEEE Internet Things J.},
  vol.~8, no.~12, pp.~9706--9716, 2020.

\bibitem{kang2017neurosurgeon}
Y.~Kang, J.~Hauswald, C.~Gao, A.~Rovinski, T.~Mudge, J.~Mars, and L.~Tang,
  ``Neurosurgeon: Collaborative intelligence between the cloud and mobile
  edge,'' {\em ACM SIGARCH Comput. Archit. News}, vol.~45, no.~1, pp.~615--629,
  2017.

\bibitem{li2019edge}
E.~Li, L.~Zeng, Z.~Zhou, and X.~Chen, ``{Edge AI: On-demand accelerating deep
  neural network inference via edge computing},'' {\em IEEE Trans. Wireless
  Commun.}, vol.~19, no.~1, pp.~447--457, 2019.

\bibitem{liu2022resource}
Z.~Liu, Q.~Lan, and K.~Huang, ``Resource allocation for multiuser edge
  inference with batching and early exiting,'' {\em IEEE J. Sel. Areas
  Commun.}, vol.~41, no.~4, pp.~1186--1200, 2023.

\bibitem{shi2019improving}
W.~Shi, Y.~Hou, S.~Zhou, Z.~Niu, Y.~Zhang, and L.~Geng, ``{Improving
  device-edge cooperative inference of deep learning via 2-step pruning},'' in
  {\em IEEE INFOCOM WKSHPS}, pp.~1--6, IEEE, 2019.

\bibitem{shao2021branchy}
J.~Shao, H.~Zhang, Y.~Mao, and J.~Zhang, ``{Branchy-GNN: A device-edge
  co-inference framework for efficient point cloud processing},'' in {\em
  ICASSP 2021-2021 IEEE ICASSP}, pp.~8488--8492, IEEE, 2021.

\bibitem{shao2020bottlenet++}
J.~Shao and J.~Zhang, ``{Bottlenet++: An end-to-end approach for feature
  compression in device-edge co-inference systems},'' in {\em 2020 IEEE ICC
  Workshops}, pp.~1--6, IEEE, 2020.

\bibitem{lan2021progressive}
Q.~Lan, Q.~Zeng, P.~POPOVSKI, D.~G{\"U}ND{\"U}Z, and K.~Huang, ``Progressive
  feature transmission for split inference at the wireless edge,'' {\em IEEE
  Trans. on Wireless Commun.}, 2021.

\bibitem{shao2022task}
J.~Shao, Y.~Mao, and J.~Zhang, ``{Task-oriented communication for multi-device
  cooperative edge inference},'' {\em IEEE Trans. Wireless Commun.}, 2022.

\bibitem{lee2023task}
H.~Lee and S.-W. Kim, ``Task-oriented edge networks: Decentralized learning
  over wireless fronthaul,'' {\em arXiv preprint arXiv:2312.01288}, 2023.

\bibitem{wen2023task}
D.~Wen, X.~Jiao, P.~Liu, G.~Zhu, Y.~Shi, and K.~Huang, ``{Task-oriented
  Over-the-Air computation for multi-device Edge AI},'' {\em IEEE Trans. on
  Wireless Commun.}, 2023.

\bibitem{zhuang2023integrated}
Z.~Zhuang, D.~Wen, Y.~Shi, G.~Zhu, S.~Wu, and D.~Niyato, ``{Integrated
  Sensing-Communication-Computation for Over-the-Air Edge AI Inference},'' {\em
  IEEE Trans. on Wireless Commun.}, 2023.

\bibitem{DBLP:journals/tsp/LiuZ15}
L.~Liu and R.~Zhang, ``Optimized uplink transmission in multi-antenna {C-RAN}
  with spatial compression and forward,'' {\em {IEEE} Trans. Signal Process.},
  vol.~63, no.~19, pp.~5083--5095, 2015.

\bibitem{DBLP:conf/ewsdn/DawsonMG14}
A.~W. Dawson, M.~K. Marina, and F.~J. Garcia, ``On the benefits of {RAN}
  virtualisation in {C-RAN} based mobile networks,'' in {\em Third European
  Workshop on Software Defined Networks, {EWSDN} 2014, Budapest, Hungary,
  September 1-3, 2014}, pp.~103--108, {IEEE} Computer Society, 2014.

\bibitem{shi2015large}
Y.~Shi, J.~Zhang, K.~B. Letaief, B.~Bai, and W.~Chen, ``{Large-scale convex
  optimization for ultra-dense cloud-RAN},'' {\em IEEE Wireless Commun.},
  vol.~22, no.~3, pp.~84--91, 2015.

\bibitem{ma2023over}
H.~Ma, X.~Yuan, and Z.~Ding, ``Over-the-air federated learning in mimo
  cloud-ran systems,'' {\em arXiv preprint arXiv:2305.10000}, 2023.

\bibitem{10167503}
Y.~Shi, S.~Xia, Y.~Zhou, Y.~Mao, C.~Jiang, and M.~Tao, ``Vertical federated
  learning over cloud-ran: Convergence analysis and system optimization,'' {\em
  IEEE Trans. on Wireless Commun.}, pp.~1--1, 2023.

\bibitem{stephen2017joint}
R.~G. Stephen and R.~Zhang, ``{Joint millimeter-wave fronthaul and OFDMA
  resource allocation in ultra-dense CRAN},'' {\em IEEE Trans. Commun.},
  vol.~65, no.~3, pp.~1411--1423, 2017.

\bibitem{DBLP:journals/jsac/ZhouY14}
Y.~Zhou and W.~Yu, ``Optimized backhaul compression for uplink cloud radio
  access network,'' {\em {IEEE} J. Sel. Areas Commun.}, vol.~32, no.~6,
  pp.~1295--1307, 2014.

\bibitem{shi2023task}
Y.~Shi, Y.~Zhou, D.~Wen, Y.~Wu, C.~Jiang, and K.~B. Letaief, ``{Task-Oriented
  Communications for 6G: Vision, Principles, and Technologies},'' {\em accepted
  to IEEE Wireless Commun. Mag.}, 2023.

\bibitem{liu2015optimized}
L.~Liu and R.~Zhang, ``{Optimized uplink transmission in multi-antenna C-RAN
  with spatial compression and forward},'' {\em IEEE Trans. Signal Process.},
  vol.~63, no.~19, pp.~5083--5095, 2015.

\bibitem{wen2023task2}
D.~Wen, P.~Liu, G.~Zhu, Y.~Shi, J.~Xu, Y.~C. Eldar, and S.~Cui, ``Task-oriented
  sensing, computation, and communication integration for multi-device edge
  ai,'' {\em IEEE Trans. Wireless Commun.}, 2023.

\bibitem{DBLP:journals/tsp/XiaoCLG06}
J.~Xiao, S.~Cui, Z.~Luo, and A.~J. Goldsmith, ``Power scheduling of universal
  decentralized estimation in sensor networks,'' {\em {IEEE} Trans. Signal
  Process.}, vol.~54, no.~2, pp.~413--422, 2006.

\bibitem{DBLP:journals/tit/XiaoL05}
J.~Xiao and Z.~Luo, ``Decentralized estimation in an inhomogeneous sensing
  environment,'' {\em {IEEE} Trans. Inf. Theory}, vol.~51, no.~10,
  pp.~3564--3575, 2005.

\bibitem{yang2017wide}
G.~Yang, J.~Li, S.~G. Zhou, and Y.~Qi, ``A wide-angle e-plane scanning linear
  array antenna with wide beam elements,'' {\em IEEE Antennas Wireless Propag.
  Lett.}, vol.~16, pp.~2923--2926, 2017.

\bibitem{2006Power}
J.~J. Xiao, S.~Cui, Z.~Q. Luo, and A.~J. Goldsmith, ``{Power scheduling of
  universal decentralized estimation in sensor networks},'' {\em IEEE Trans.
  Signal Process.}, vol.~54, no.~2, pp.~413--422, 2006.

\bibitem{DBLP:journals/widm/McLachlanR14}
G.~J. McLachlan and S.~I. Rathnayake, ``On the number of components in a
  {G}aussian mixture model,'' {\em WIREs Data Mining Knowl. Discov.}, vol.~4,
  no.~5, pp.~341--355, 2014.

\bibitem{mclachlan2019finite}
G.~J. McLachlan, S.~X. Lee, and S.~I. Rathnayake, ``Finite mixture models,''
  {\em Annu. Rev. Statist. Its Appl.}, vol.~6, pp.~355--378, 2019.

\bibitem{yang2020federated}
K.~Yang, T.~Jiang, Y.~Shi, and Z.~Ding, ``{Federated learning via Over-the-Air
  computation},'' {\em IEEE Trans. Wireless Commun.}, vol.~19, no.~3,
  pp.~2022--2035, 2020.

\bibitem{zhu2019broadband}
G.~Zhu, Y.~Wang, and K.~Huang, ``{Broadband analog aggregation for low-latency
  federated edge learning},'' {\em IEEE Trans. Wireless Commun.}, vol.~19,
  no.~1, pp.~491--506, 2019.

\bibitem{DBLP:journals/twc/CaoZXH20}
X.~Cao, G.~Zhu, J.~Xu, and K.~Huang, ``Optimized power control for over-the-air
  computation in fading channels,'' {\em {IEEE} Trans. Wirel. Commun.},
  vol.~19, no.~11, pp.~7498--7513, 2020.

\bibitem{liu2020over}
W.~Liu, X.~Zang, Y.~Li, and B.~Vucetic, ``{Over-the-Air computation systems:
  Optimization, analysis and scaling laws},'' {\em IEEE Trans. on Wireless
  Commun.}, vol.~19, no.~8, pp.~5488--5502, 2020.

\bibitem{csahin2023survey}
A.~{\c{S}}ahin and R.~Yang, ``A survey on over-the-air computation,'' {\em IEEE
  Commun. Surv. Tutorials}, 2023.

\bibitem{2015Fronthaul}
M.~Peng, C.~Wang, V.~Lau, and H.~V. Poor, ``{Fronthaul-Constrained Cloud Radio
  Access Networks: Insights and Challenges},'' {\em IEEE Wireless Commun.},
  vol.~22, no.~2, pp.~152--160, 2015.

\bibitem{quek2017cloud}
T.~Q. Quek, M.~Peng, O.~Simeone, and W.~Yu, {\em {Cloud Radio Access Networks:
  Principles, Technologies, and Applications}}.
\newblock Cambridge University Press, 2017.

\bibitem{kullback1997information}
S.~Kullback, {\em Information Theory and Statistics}.
\newblock Courier Corporation, 1997.

\bibitem{wen2019reduced}
D.~Wen, G.~Zhu, and K.~Huang, ``{Reduced-Dimension Design of MIMO Over-the-Air
  Computing for Data Aggregation in Clustered IoT Networks},'' {\em IEEE Trans.
  Wireless Commun.}, vol.~18, no.~11, pp.~5255--5268, 2019.

\bibitem{wiesel2008zero}
A.~Wiesel, Y.~C. Eldar, and S.~Shamai, ``{Zero-forcing precoding and
  generalized inverses},'' {\em IEEE Trans. Signal Process.}, vol.~56, no.~9,
  pp.~4409--4418, 2008.

\bibitem{li2019wirelessly}
X.~Li, G.~Zhu, Y.~Gong, and K.~Huang, ``{Wirelessly powered data aggregation
  for IoT via over-the-air function computation: Beamforming and power
  control},'' {\em IEEE Trans. Wireless Commun.}, vol.~18, no.~7,
  pp.~3437--3452, 2019.

\bibitem{cvx}
M.~Grant and S.~Boyd, ``{CVX}: Matlab software for disciplined convex
  programming, version 2.1.'' \url{http://cvxr.com/cvx}, Mar. 2014.

\bibitem{marks1978general}
B.~R. Marks and G.~P. Wright, ``A general inner approximation algorithm for
  nonconvex mathematical programs,'' {\em Operations research}, vol.~26, no.~4,
  pp.~681--683, 1978.

\bibitem{DBLP:journals/twc/SunNW21}
C.~Sun, W.~Ni, and X.~Wang, ``Joint computation offloading and trajectory
  planning for uav-assisted edge computing,'' {\em {IEEE} Trans. Wirel.
  Commun.}, vol.~20, no.~8, pp.~5343--5358, 2021.

\bibitem{DBLP:journals/tvt/LyuXZZ23}
W.~Lyu, Y.~Xiu, J.~Zhao, and Z.~Zhang, ``Optimizing the age of information in
  ris-aided {SWIPT} networks,'' {\em {IEEE} Trans. Veh. Technol.}, vol.~72,
  no.~2, pp.~2615--2619, 2023.

\bibitem{li2021wireless}
G.~Li, S.~Wang, J.~Li, R.~Wang, X.~Peng, and T.~X. Han, ``{Wireless sensing
  with deep spectrogram network and primitive based autoregressive hybrid
  channel model},'' in {\em 2021 IEEE 22nd International Workshop on SPAWC},
  pp.~481--485, IEEE, 2021.

\bibitem{DBLP:journals/corr/abs-1708-07747}
H.~Xiao, K.~Rasul, and R.~Vollgraf, ``Fashion-mnist: a novel image dataset for
  benchmarking machine learning algorithms,'' {\em CoRR}, vol.~abs/1708.07747,
  2017.

\bibitem{rea2016many}
A.~Rea and W.~Rea, ``{How many components should be retained from a
  multivariate time series PCA?},'' {\em arXiv preprint arXiv:1610.03588},
  2016.

\bibitem{DBLP:journals/tip/KhalilianB13}
H.~Khalilian and I.~V. Bajic, ``{Video Watermarking With Empirical PCA-Based
  Decoding},'' {\em {IEEE} Trans. Image Process.}, vol.~22, no.~12,
  pp.~4825--4840, 2013.

\end{thebibliography}
\bibliographystyle{ieeetr}

\end{document}